\documentclass[10pt,twocolumn]{article}
\usepackage[margin=0.75in]{geometry}
\usepackage{amsmath,amssymb,amsthm}
\usepackage[hidelinks]{hyperref}
\usepackage{cite}
\usepackage{esvect}
\usepackage{enumitem}
\usepackage{booktabs}
\usepackage{algorithm}
\usepackage[noend]{algpseudocode}
\usepackage{tikz}
\usepackage{pgfplots}
\pgfplotsset{compat=1.18}
\usetikzlibrary{patterns,positioning,pgfplots.fillbetween}

\newtheorem{theorem}{Theorem}
\newtheorem{lemma}[theorem]{Lemma}
\newtheorem{corollary}[theorem]{Corollary}
\newtheorem{definition}[theorem]{Definition}
\newtheorem{remark}[theorem]{Remark}

\newcommand{\Trap}{\textsc{TRAP}}
\newcommand{\Snare}{\textsc{SNARE}}
\newcommand{\BFTCR}{\textsc{BFTCR}}
\newcommand{\GG}{\mathcal{G}}
\newcommand{\RR}{\mathcal{R}}
\newcommand{\LL}{\mathcal{L}}
\newcommand{\SSS}{\mathcal{S}}
\usepackage[normalem]{ulem}

\title{\textbf{WIP: \Snare{}: A TRAP for Rational Players to Solve\\Byzantine Consensus in the $\boldsymbol{5f{+}1}$ Model}}
\author{Alejandro Ranchal-Pedrosa\\ \textit{Sei Labs}\\ \texttt{alex@seinetwork.io}
\and
Benjamin Marsh\\ \textit{Sei Labs and University of Portsmouth}}
\date{}

\begin{document}
\maketitle

\begin{abstract}
The \Trap{} protocol solves rational agreement by combining accountable consensus (which \emph{predecides}) with a one-shot \BFTCR{} finalization phase (which \emph{decides}). We present \Snare{} (\emph{Scalable Nash Agreement via Reward and Exclusion}), the adaptation of \Trap{} to $n=5f{+}1$, and prove $\epsilon$-$(k,t)$-robustness for rational agreement tolerating coalitions up to ${\approx}73\%$ with deposits under $0.5\%$ of the gain.

A central finding is that appending a single all-to-all broadcast round with the $4f{+}1$ threshold after predecisions yields $\epsilon$-$(k,t)$-robustness for coalitions up to $3f$ (${\approx}60\%$) \emph{without any deposit}: we need not model or know the utility function of deviating players, only that they participate in the protocol. These players can be \emph{deceitful} (arbitrary unknown utility), not just rational, and the finalization structure prevents disagreement regardless of their motivation. This observation is protocol-agnostic, applies to any $5f{+}1$ protocol at the cost of one message delay that runs concurrently with the next view, and does not require commit-reveal mechanisms. Above $60\%$, the full baiting mechanism with deposits under $0.5\%$ extends tolerance to ${\approx}73\%$.

A second finding is that valid-candidacy, the property preventing reward front-running, holds unconditionally regardless of the quorum threshold, removing both the $n>2(k{+}t)$ and $n>\frac{3}{2}k{+}3t$ constraints from the original \Trap{}. This retroactively extends the $3f{+}1$ bound from $C<n/2$ to $C<5n/9$. The binding constraint in both models is the winner consensus operating on $2f$ residual players after excluding $3f{+}1$ detected equivocators. We explore avenues for relaxing this limit.
\end{abstract}

\section{Introduction}\label{sec:intro}

The \Trap{} protocol of Ranchal-Pedrosa and Gramoli~\cite{TRAP22} solves the rational agreement problem under partial synchrony by introducing a baiting strategy that rewards coalition members for betraying their coalition before a disagreement is finalized. The protocol separates an accountable consensus layer producing \emph{predecisions} from a one-shot Byzantine Fault Tolerant Commit-Reveal (\BFTCR{}) phase producing \emph{decisions}. In the $n=3f{+}1$ model, $\epsilon$-$(k,t)$-robustness holds for $n>\max(\frac{3}{2}k{+}3t,\;2(k{+}t))$, tolerating coalitions up to $n/2$ and at most double-spending.

We present \Snare{}, the adaptation of \Trap{} to $n=5f{+}1$ ($t_0=f$, quorum $h=4f{+}1$). Two structural features reshape the landscape.

First, the same-view partition bound~\cite{RG21} yields $C_{\textsf{fin}}(2)=3f{+}1\approx 60\%$, creating a large \emph{no-fork zone} ($C\leq 3f$) where the finalization phase cannot produce conflicting decisions. This zone tolerates \emph{deceitful} players~\cite{Basilic23} with arbitrary unknown utility functions, not just rational players whose utility we model. The no-fork property follows from a single all-to-all broadcast round with the $4f{+}1$ threshold after predecisions. This observation is protocol-agnostic and applies to \emph{any} $5f{+}1$ protocol at the cost of one message delay off the critical path. Notably, the $60\%$ safety threshold is nearly double the $33\%$ of two-round voting in the standard $3f{+}1$ model. The finalization round also provides natural resistance to long-range attacks: cross-view disagreements (where an adversary equivocates across view changes to create conflicting predecisions) are subsumed by the single-shot finalization, which cannot fork below $60\%$.

Second, valid-candidacy holds unconditionally in the $5f{+}1$ model. The constraints $n>2(k{+}t)$ and $n>\frac{3}{2}k{+}3t$ in the original \Trap{} (Lemma~4.3 of~\cite{TRAP22}) ensure that baiters' commitments reach enough honest players for valid proofs-of-baiting. We show (Lemma~\ref{lem:vc}) that both are unnecessary: the fork mechanism itself delivers the baiters' encrypted commitments to all honest players, reducing the constraint to $n{-}C\geq f{+}1$, which is subsumed by the winner consensus feasibility condition. This has immediate implications for the original $3f{+}1$ \Trap{}: removing both constraints extends the achievable coalition from $C<n/2$ to $C<5n/9\approx 56\%$.

With valid-candidacy removed, the binding limit becomes the \emph{winner consensus} among $n'=2f$ residual players after excluding $3f{+}1$ detected equivocators, requiring the remaining coalition below $n'/3$. We explore avenues for relaxing this bottleneck. The paper is organized around the branch count $a$: no-fork ($a=1$, no deposits), double-spend ($a=2$), triple-spend ($a=3$), and the winner consensus ceiling.

\subsection{Contributions}

(1) We observe that one extra all-to-all broadcast round, outside the critical path of view-change consensus and executed only once per decision, with threshold $4f{+}1$ increases safety $3\times$ in any $5f{+}1$ protocol, from ${\approx}20\%$ to ${\approx}60\%$, at zero financial cost. In blockchains and other repeated consensus settings, this round can be piggybacked onto the next consensus iteration: clients or validators can simply wait one additional message delay after deciding before performing off-chain actions. (2) We derive the no-fork regime $C\leq 3f$ (${\approx}60\%$) where no deposit is needed and which tolerates even deceitful players. (3) We prove valid-candidacy (Lemma~\ref{lem:vc}), removing the constraints $n>2(k{+}t)$ and $n>\frac{3}{2}k{+}3t$ from~\cite{TRAP22} and extending the coalition to ${\approx}73\%$. This also applies retroactively to $3f{+}1$. (4) We derive the full analysis for double-spending (Section~\ref{sec:double}) and triple-spending (Section~\ref{sec:triple}) with deposits under $0.5\%$ of the gain. (5) We identify the winner consensus as the binding constraint and discuss paths to relax it (Section~\ref{sec:discussion}). (6) As a byproduct, we show that the original $3f{+}1$ \Trap{} can be retroactively extended from $C<n/2$ to $C<5n/9$, tolerating triple-spending, by removing both valid-candidacy constraints $n>2(k{+}t)$ and $n>\frac{3}{2}k{+}3t$ (Section~\ref{sec:vc}).

\section{Related Work}\label{sec:related}

\paragraph{Byzantine consensus.}
Consensus under partial synchrony requires $n>3t$~\cite{DLS88}. The $5f{+}1$ regime achieves two-round optimistic latency~\cite{momose2021}. A recent wave of two-phase $5f{+}1$ protocols, including Alpenglow~\cite{Alpenglow25}, Minimmit~\cite{Minimmit25}, ChonkyBFT~\cite{ChonkyBFT25}, and Kudzu~\cite{Kudzu25}, achieve fast-path finalization at ${\approx}20\%$ Byzantine tolerance and fall back to a slow path resembling $3f{+}1$ for higher fault rates. Our no-fork observation is orthogonal: it is a generic one-round add-on that raises the safety threshold from $20\%$ to $60\%$ in any $5f{+}1$ protocol. Malkhi et al.~\cite{flexibleBFT} introduced the alive-but-corrupt model; this is more restrictive than the deceitful fault model of Basilic~\cite{Basilic23} in that deceitful faults can try to prevent agreement even if that results in no liveness instead. Our no-fork regime tolerates deceitful faults, and by extension alive-but-corrupt faults, without deposits.

\paragraph{Rational consensus.}
Abraham et al.~\cite{ADGH06} formalized $\epsilon$-$(k,t)$-robustness and implemented mediators with cheap talks for $n>k{+}2t$. Abraham et al.~\cite{ADGH19} extended this to asynchronous cheap talks for $n>3(k{+}t)$. Ben-Porath~\cite{BP03} and Heller~\cite{Heller05} studied punishment strategies and coalition-proof equilibria. \Trap{}~\cite{TRAP22} was the first partial-synchrony solution without solution preference, and proved that baiting is necessary when the coalition can fork the finalization phase~\cite[Thm.~3.2]{TRAP22}; we extend this impossibility to the $5f{+}1$ model in Section~\ref{sec:impossibility}.

\paragraph{Accountability and fork analysis.}
Polygraph~\cite{CGG20,CGG21} and ABC~\cite{CGGK22} introduced accountable Byzantine agreement. Sheng et al.~\cite{SWKV21} studied BFT protocol forensics. ZLB~\cite{RG21} and Basilic~\cite{Basilic23} analyzed fork branches as a function of voting thresholds. Basilic achieves resilient-optimal bounds $n>3t{+}d{+}2q$ for consensus and $n>2t{+}d{+}q$ for eventual consensus in the BDB failure model. The fork-branch theorem of ZLB~\cite[Thm.~4.2]{RG21} is the basis of our finalization analysis, and we use the BDB model's eventual consensus mode as a candidate for relaxing the winner consensus bottleneck.

\section{Model}\label{sec:model}

We work in $n=5f{+}1$ under partial synchrony~\cite{DLS88} with $t_0=f$ and quorum $h=4f{+}1$. We consider a game played by a set $N$ of $|N|=n$ players, each of type Byzantine, rational, or correct. At most $t\leq f$ are Byzantine, $k$ are rational, and $n{-}k{-}t$ are correct. The game is in extensive form, described by a game tree, and we introduce a scheduler that models message delays under partial synchrony, alternating moves with the players~\cite{ADGH06,TRAP22}. The coalition is $C=k{+}t$.

Standard cryptography (unforgeable signatures, computationally bounded players) and cheap talks (private pairwise channels with negligible cost) are assumed~\cite{ADGH06,TRAP22}. Each player has a public and private key, and public keys are common knowledge.

\subsection{An extensive form game}\label{sec:gt-model}
Fix a deployment of \Snare{} with parameters $(n,f,\GG,\LL,\RR)$. For every initial proposal vector $\vv{x}=(x_1,\ldots,x_n)$ and type vector
$\vv{\theta}\in\{\mathsf{correct},\mathsf{rational},\mathsf{Byz}\}^n$, the interaction induced by \Snare{} is an extensive form game $\Gamma(\vv{x},\vv{\theta})$ with players $N\cup\{s\}$, where $s$ is the
scheduler. A history is a finite sequence of scheduler moves and player moves. At a scheduler node, $s$ chooses a player $p_i$ to move next and a subset of the messages currently in transit to be delivered to $p_i$
immediately before $p_i$ moves. At a player node, $p_i$ chooses one action allowed by its current local state, sending protocol messages, revealing keys, outputting a value, or stopping. The scheduler is constrained by partial synchrony such that before GST it may delay messages arbitrarily, after GST, every message sent by a non-faulty player is delivered within some finite bound $\Delta$, and every non-terminated player is scheduled infinitely often. The information set of player $p_i$ is its local state, namely its input, private randomness, keys, messages delivered so far, and its own past actions. A behavioral strategy of $p_i$ is a function from information sets to distributions over actions. We write $\SSS_i$ for the strategy set of $p_i$ and
$\SSS_I=\prod_{i\in I}\SSS_i$.

For a terminal history $z$, let $\mathsf{cons}(z)=1$ if and only if all non-faulty players that decide output the same value and that value is valid, otherwise, $\mathsf{cons}(z)=0$. Additionally, let $a(z)\geq 1$ be the number of spendable branches realized in $z$, and let $K(z)\subseteq N$ be the set of rational coalition members who share the loot in $z$, where $\kappa(z)=|K(z)|$. Finally, let $w(z)\in N\cup\{\bot\}$ denote the reward winner selected by the protocol, and let $S(z)\subseteq N$ be the set of players slashed in $z$. Correct players follow the protocol and obtain utility $1$ if
$\mathsf{cons}(z)=1$ and $0$ otherwise. Byzantine players are unrestricted. A rational player $p_i$ has utility
\begin{equation}\label{eq:utility}
\begin{split}
    u_i(z)=&\mathbf{1}[\mathsf{cons}(z)=1]
+\mathbf{1}[a(z)\geq 2 \wedge i\in K(z)]\cdot \frac{(a(z)-1)\GG}{\kappa(z)}\\
+&\mathbf{1}[w(z)=i]\RR
-\mathbf{1}[i\in S(z)]\LL.
\end{split}
\end{equation}
Thus a rational player prefers a successful unique valid decision to
non-termination, but if the coalition realizes an $a$-fold spend then the loot term is added to its payoff.

\subsection{Solution concepts}

We work with $(k,t)$-robustness~\cite{ADGH06}, which strengthens Nash equilibrium to resist joint deviations by a coalition of $k$ rational and $t$ Byzantine players. We restate here the definitions that we require, but refer to \cite{TRAP22} for a detailed explanation. Intuitively, a protocol is $(k,t)$-robust if no rational coalition member can gain by deviating, even when coordinating with up to $t$ Byzantine players. Rational agreement asks for a protocol that is both BFT-safe against $f$ Byzantines alone and $(k,t)$-robust for some coalition exceeding the Byzantine-alone bound. Punishment strategies make deviating costly; baiting strategies go further by rewarding coalition members who expose the coalition, converting the incentive from ``support the fork'' to ``betray for a reward.''

\begin{definition}[$\epsilon$-$(k,t)$-robustness~\cite{ADGH06,TRAP22}]\label{def:robust}
  A joint strategy $\vv{\sigma}\in \mathcal{S}$  is an $\epsilon$-$(k,t)$-robust (resp. strongly $\epsilon$-$(k,t)$-robust) equilibrium
  if for all $K, T\subseteq N$ such that $K\cap T = \emptyset, |K|\leq k,$ and $|T|\leq t$, for all $\vv{\tau}_T \in \mathcal{S}_T$, for all $\vv{\phi}_K\in\mathcal{S}_K$, for some (resp. all)
  $i\in K$, and all strategies of the scheduler $\sigma_s$, we have $u_i(\vv{\sigma}_{-T},\vv{\tau}_T,\sigma_s)\geq u_i(\vv{\sigma}_{N-(K\cup T)}, \vv{\phi}_K, \vv{\tau}_T,\sigma_s)-\epsilon$. We speak instead of a $(k,t)$-robust equilibrium if $\epsilon=0$.
\end{definition}

\begin{definition}[Rational agreement~\cite{TRAP22}]
  Consider a system with $n$ players, a protocol $\vv{\sigma}$ solves
the rational agreement problem if $\vv{\sigma}$ is a $f$-immune
protocol for consensus (i.e. solving consensus against $f$ Byzantines and no rationals), and is also $\epsilon$-$(k,t)$-robust for some
$k>0, t>0$ such that $f+1 \leq (k+t)$ (i.e. solving consensus for some combination that exceeds the Byzantine-alone bound). 
\end{definition}

\begin{definition}[Punishment strategy~\cite{ADGH06,TRAP22}]\label{def:punishment}  A joint strategy $\vv{\gamma}$ is a 
$(k,t)$-punishment strategy with respect to $\vv{\sigma}$ if for all $K,T,P\subseteq N $ such that $K,T,P$ are disjoint, $|K|\leq k,|T|\leq t,|P|> t$, for all $\vv{\tau}\in \mathcal{S}_T$, for all $\vv{\phi}_K\in \mathcal{S}_K$, for all $i\in K$, and all strategies of the scheduler $\sigma_s$, we have $u_i(\vv{\sigma}_{-T},\vv{\tau}_T,\sigma_s) > u_i(\vv{\sigma}_{N-(K\cup T \cup P)}, \vv{\phi}_K, \vv{\tau}_T, \vv{\gamma}_P,\sigma_s)$.
  \end{definition}

\begin{definition}[Baiting strategy~\cite{TRAP22}]\label{def:baiting}
    A joint strategy $\vv{\eta}$ is a $(k,t,m)$-baiting strategy with respect to a strategy $\vv{\sigma}$ if $\vv{\eta}$ is a $(k-m,t)$-punishment strategy with respect to $\vv{\sigma}$, with $0< m \leq k$ and for all $K,T,P\subseteq N$ such that $K\cap T =\emptyset ,\, |P\cap K|\geq m,\, P\cap T = \emptyset$, $|K\backslash P|\leq k-m,|T|\leq t, |P|> t$, for all $\vv{\tau}\in \mathcal{S}_T$, all $\vv{\phi}_{K\backslash P}\in \mathcal{S}_{K\backslash P}-\{\vv{\sigma}_K\}$, all $\vv{\theta}_{P}\in \mathcal{S}_{P}$, all $i\in P$, and all strategies of the scheduler $\sigma_s$, we have $
      u_i(\vv{\sigma}_{N-(K\cup T \cup P)}, \vv{\phi}_{K\backslash P},\vv{\tau}_T, \vv{\eta}_P, \sigma_s)\geq   \\
      u_i(\vv{\sigma}_{N-(K\cup T \cup P)}, \vv{\phi}_{K\backslash P},\vv{\tau}_T,\vv{\theta}_{P},\sigma_s)$.
    Additionally, we speak of a strong $(k,t,m)$-\textit{baiting strategy} in the particular case where for all rational coalitions $K \subseteq N$ such that $|K|\leq k$, $|K\cap P|\geq m$ and all $\vv{\phi}_{K\backslash P}\in \mathcal{S}_{K\backslash P}$ we have:
    $\sum_{i\in K}u_i(\vv{\sigma}_{N-(K\cup P)}, \vv{\phi}_{K\backslash P}, \vv{\eta}_P,\sigma_s) \leq \sum_{i\in K}u_i(\vv{\sigma}, \sigma_s).$
    We write (strong) $(k,t)$-baiting strategy instead to refer to a (strong) $(k,t,m)$-baiting strategy for some $m$, with $0< m \leq k$.
    \end{definition}

We use Definitions~\ref{def:robust}--\ref{def:baiting} relative to the extensive form game $\Gamma(\vv{x},\vv{\theta})$.

\section{The \Snare{} Protocol}\label{sec:snare}
We call the $5f{+}1$ instantiation of \Trap{} the \Snare{} protocol (\emph{Scalable Nash Agreement via Reward and Exclusion}). The $3\times$ quorum intersection in the $5f{+}1$ model amplifies both the detection ($3f{+}1$ fraudsters) and the reward ($\RR=3f\LL$), enabling a wider net than the original $3f{+}1$ formulation. The baiting strategy converts a coalition member's incentive from ``support the disagreement'' to ``betray the coalition for a reward.'' For this to work, the deposit and reward must be calibrated so that baiting is a strictly dominant strategy for a rational coalition member who believes at least $m{-}1$ others also bait (Section~\ref{sec:double}).
We present here the \Snare{} protocol, for which we recall the structure of \Trap{} first.
\subsection{The two layers of \Trap{}}

The \Trap{} architecture has three components: (i)~a financial component (deposit $\LL$ per player, reward $\RR$), (ii)~an accountable consensus layer producing predecisions, and (iii)~a one-shot \BFTCR{} finalization layer producing decisions.

The \BFTCR{} phase, as specified in Algorithm~1 of~\cite{TRAP22}, consists of two reliable broadcasts (RB1 for encrypted commitments, RB2 for lists of $h=4f{+}1$ delivered RB1 messages), a reveal step (broadcasting decryption keys), and either direct decision (if all decrypted hashes agree, line~31 of~\cite{TRAP22}) or a winner consensus with reward/slashing (if proofs of fraud are revealed, lines~36--40 of~\cite{TRAP22}).

Only the output of \BFTCR{} constitutes a decision. Predecision disagreements are the input to the baiting mechanism, not an evasion of it. We note that the full \BFTCR{} commit-reveal mechanism is needed only when the coalition exceeds $3f$ (the fork regimes). Below that threshold, a simpler protocol suffices, as we discuss in Section~\ref{sec:nofork}.

\subsection{Protocol specification}

We present three procedures that compose the \Snare{} protocol.

\paragraph{Simple finalization (Algorithm~\ref{alg:simple}).}
This is the lightweight finalization for the no-fork regime. Each player enters Algorithm~\ref{alg:simple} only after the accountable consensus terminates, which requires a certificate of $4f{+}1$ votes for a single value. A player therefore has a unique predecision $v_i$ backed by such a certificate. If two players hold certificates for different values, more than $f$ players must have signed both certificates, constituting equivocation. Each player broadcasts its signed predecision hash together with its accountable consensus certificate. If $4f{+}1$ matching hashes are collected, the value is decided. If conflicting hashes are found, players fetch the corresponding accountable consensus certificates, extract the equivocating signatures from the conflicting certificates, and resolve the disagreement deterministically. This single all-to-all round is the mechanism that raises safety from ${\approx}20\%$ to ${\approx}60\%$ in any $5f{+}1$ protocol.

\begin{algorithm}[t]
\caption{Simple Finalization for player $p_i$}
\label{alg:simple}
\small
\begin{algorithmic}[1]
\Require predecision $v_i$ and accountable consensus certificate $C_i$
\State \textbf{broadcast} $\langle\textsc{HashCert}, H(v_i), C_i\rangle_{\sigma_i}$ to all
\State $\mathit{certs}[i] \gets C_i$; $\mathit{hashes}[i] \gets H(v_i)$
\Statex \textbf{upon} delivering $\langle\textsc{HashCert}, h, C_j\rangle_{\sigma_j}$ from $p_j$\textbf{:}
\State \hspace{1em}$\mathit{hashes}[j] \gets h$; $\mathit{certs}[j] \gets C_j$
\If{$|\{j : \mathit{hashes}[j] = H(v_i)\}| \geq 4f{+}1$}
  \State \textbf{decide} $v_i$
\EndIf
\If{$\exists\, j,\ell$: $\mathit{hashes}[j] \neq \mathit{hashes}[\ell]$}
  \If{$\textsc{ConflictCert}(\mathit{certs}[j],\mathit{certs}[\ell])$}
    \State $\textit{PoFs} \gets \textsc{ExtractEquivocations}(\mathit{certs}[j],\mathit{certs}[\ell])$
    \State \textbf{slash} players proven in \textit{PoFs}
    \State \textbf{decide} $\textsc{Resolve}(\mathit{certs}[j],\mathit{certs}[\ell])$
  \EndIf
\EndIf
\end{algorithmic}
\end{algorithm}

\paragraph{BFTCR (Algorithm~\ref{alg:bftcr}).}
This is the commit-reveal phase from~\cite{TRAP22}, adapted to $5f{+}1$. It extends simple finalization with encrypted commitments, a two-phase reliable broadcast, and a reveal step that distinguishes baiters from non-baiters. If all revealed hashes agree, the value is decided directly. If proofs of fraud are revealed by a baiter, \BFTCR{} returns the resolved decision together with the PoFs; the financial operations (slashing, winner consensus, reward) are handled by the wrapper (Algorithm~\ref{alg:trap}).

\begin{algorithm}[t]
\caption{\BFTCR{} for player $p_i$ (adapted from~\cite{TRAP22})}
\label{alg:bftcr}
\small
\begin{algorithmic}[1]
\Require predecision $v_i$; key pair $(k_i, k_i^{-1})$
\State $c_i \gets$ \textbf{if} own PoFs \textbf{then} $\textsc{Enc}(k_i, \textit{PoFs})$
\Statex \hspace{3.2em}\textbf{else} $\textsc{Enc}(k_i, H(v_i))$
\State $\textsc{RB1}_i.\textsc{Start}(c_i)$ \Comment{commit encrypted}
\Statex \textbf{upon} RB-delivering $c_j$ from $\textsc{RB1}_j$\textbf{:}
\State \hspace{1em}$\mathit{rb1}[j] \gets c_j$
\State \hspace{1em}\textbf{if} $|\mathit{rb1}| \geq 4f{+}1$ \textbf{then} $\textsc{RB2}_i.\textsc{Start}(\mathit{rb1})$
\Statex \textbf{upon} RB-delivering $L_j$ from $\textsc{RB2}_j$\textbf{:}
\State \hspace{1em}$\mathit{rb2}[j] \gets L_j$
\State \hspace{1em}\textbf{if} $|\mathit{rb2}| \geq 4f{+}1$ \textbf{and} $|\mathit{rb1}| \geq 4f{+}1$
\Statex \hspace{2em}\textbf{then broadcast} $k_i^{-1}$ \Comment{reveal}
\Statex \textbf{upon} delivering $k_j^{-1}$ (and RB1/RB2 from $j$)\textbf{:}
\State \hspace{1em}$d_j \gets \textsc{Dec}(\mathit{rb1}[j],\, k_j^{-1})$
\If{$d_j.\mathit{type} = \textsc{Hash}$ \textbf{and} $4f{+}1$ hashes match}
  \State \textbf{decide} matching value
\ElsIf{$d_j.\mathit{type} = \textsc{PoFs}$ \textbf{and} $\textsc{Verify}(d_j)$}
  \State record valid PoFs from $j$
\EndIf
\If{PoFs received}
  \State \Return $(\textsc{Resolve}(\text{predecisions}), \; \textit{PoFs})$
\EndIf
\end{algorithmic}
\end{algorithm}

\paragraph{\Trap{} wrapper (Algorithm~\ref{alg:trap}).}
The full protocol orchestrates deposits, accountable consensus, and finalization. The choice between simple finalization and \BFTCR{} is a deployment-time configuration reflecting the target regime: if the system is designed for the no-fork regime (safe for any $C\leq 3f$), simple finalization with $\LL=0$ suffices; if it targets the fork regimes (safe for $C$ up to $C_{\textsf{wc}}$), \BFTCR{} with deposits is used. In the fork regime, when \BFTCR{} detects proofs of fraud, the $3f{+}1$ detected equivocators are slashed and excluded, and a winner consensus runs among the $2f$ residual players to select the reward recipient (Lemma~\ref{lem:wc}).
 The actual $(k,t)$ are unknown at deployment; the protocol guarantees safety for whatever $(k,t)$ materializes, as long as the regime's bounds are met. We assume that $\textsc{WinnerConsensus}$ selects a uniformly random valid candidate among the valid candidates, this is the source of the $1/m$ factor in the baiting expected utility calculations below.

\begin{algorithm}[t]
\caption{\Snare{} wrapper for player $p_i$}
\label{alg:trap}
\small
\begin{algorithmic}[1]
\Require proposal $b_i$; deposit $\LL$; regime $\in \{\textsc{NoFork}, \textsc{Fork}\}$
\State \textbf{collect} deposit $\LL$ from each player
\State $v_i \gets \textsc{AccountableConsensus}(b_i)$
\If{regime $= \textsc{NoFork}$} \Comment{system parameter}
  \State $d_i \gets \textsc{SimpleFinalization}(v_i)$ \Comment{Alg.~\ref{alg:simple}}
\Else \Comment{regime $= \textsc{Fork}$}
  \State $(d_i, \textit{PoFs}) \gets \textsc{BFTCR}(v_i)$ \Comment{Alg.~\ref{alg:bftcr}}
\EndIf
\If{fraudsters detected via PoFs}
  \State \textbf{slash} $\LL$ from each proven fraudster
  \State $w \gets \textsc{WinnerConsensus}(\textit{candidates}, \textit{PoFs})$ \Comment{among $2f$ residual}
  \State \textbf{reward} $w$ with $\RR = 3f \cdot \LL$
\EndIf
\State \textbf{return} remaining deposits to honest players
\State \textbf{output} $d_i$
\end{algorithmic}
\end{algorithm}

\section{Finalization Thresholds}\label{sec:branches}

The number of branches (conflicting decided values) a coalition can produce in a single-shot protocol is governed by the same-view partition bound of Ranchal-Pedrosa and Gramoli~\cite[Thm.~4.2]{RG21}. For a protocol with voting threshold $h$ and $n$ players, the minimum coalition for $a$ branches is:
\begin{equation}\label{eq:cfin}
C_{\textsf{fin}}(a)=\left\lceil\frac{ah-n}{a-1}\right\rceil.
\end{equation}
Substituting $n=5f{+}1$ and $h=4f{+}1$, the thresholds for the $5f{+}1$ model are given in Table~\ref{tab:thresholds}.\label{thm:zlb}

\begin{table}[t]
\centering
\caption{Minimum coalition $C_{\textsf{fin}}(a)$ for $a$ branches in the $5f{+}1$ model~\cite{RG21}.}
\label{tab:thresholds}
\small
\begin{tabular}{@{}cccl@{}}
\toprule
$a$ & $C_{\textsf{fin}}(a)$ & $f{=}20$ & Fraction \\
\midrule
$2$ & $3f{+}1$ & $61$ & $60\%$ \\
$3$ & $\lceil(7f{+}2)/2\rceil$ & $71$ & $70\%$ \\
$4$ & $\lceil(11f{+}3)/3\rceil$ & $75$ & $74\%$ \\
$5$ & $\lceil(15f{+}4)/4\rceil$ & $76$ & $75\%$ \\
$\infty$ & $4f{+}1$ & $81$ & $80\%$ \\
\bottomrule
\end{tabular}
\end{table}

For comparison, in $3f{+}1$ ($h=2n/3$): $C_{\textsf{fin}}(2)=n/3\approx 33\%$, $C_{\textsf{fin}}(3)=n/2=50\%$.

As $a\to\infty$, $C_{\textsf{fin}}(a)\to h=4f{+}1$: a coalition controlling $80\%{-}\epsilon$ can split the remaining honest players into arbitrarily many partitions, each reaching quorum with the coalition's help. Note that in any fork with $a\geq 2$ branches, the quorum intersection $2(4f{+}1)-(5f{+}1)=3f{+}1$ ensures that at least $3f{+}1$ equivocating coalition members are eventually detected through their conflicting signatures thanks to accountability. \Trap{} and \Snare{} simply ensure that this disagreement on predecisions is detected before it becomes a disagreement on decisions.\label{prop:detection}

\section{Cross-View Predecisions}\label{sec:predecision}

Before \BFTCR{} starts, the partially synchronous accountable consensus runs across views. A coalition exceeding $f$ can exploit sequential finalization across views: the coalition equivocates in a view change, causing one honest player to predecide one value while the rest predecide another. Each additional predecision consumes at least one honest player who predecides and stops participating. This cross-view disagreement can scale quickly as soon as $k+t>f$.

As a concrete example with $n=5f{+}1$ and coalition $C=f{+}1$: in view~1, all $4f$ honest players and $f{+}1$ coalition members vote for $A$, totaling $5f{+}1$; at least one honest player observes $4f{+}1$ for $A$ and predecides; in view~2, the $f{+}1$ equivocators support $B$, and the scheduler ensures $3f$ honest players (who did not observe $4f{+}1$ for $A$) plus $f{+}1$ equivocators reach $4f{+}1$ for $B$; the remaining honest predecide $B$.

The crucial architectural point is that \BFTCR{} runs \emph{once per decision}. All predecisions from across views are collected as input to a single \BFTCR{} instance. The safety of the final output is therefore governed by the branch bound~\eqref{eq:cfin}~\cite{RG21} applied to this single \BFTCR{} instance, not by the cross-view predecision count. A coalition that creates $f{+}1$ predecisions across views still needs to cause a fork \emph{inside} \BFTCR{} to profit, which requires $C\geq 3f{+}1$ by Table~\ref{tab:thresholds}.

This is why a single all-to-all broadcast round with threshold $4f{+}1$, appended after predecisions and before declaring a decision, already increases safety from ${\approx}20\%$ to ${\approx}60\%$ in any $5f{+}1$ protocol. This round collects predecision hashes from $4f{+}1$ players and decides if they agree. By the branch bound~\eqref{eq:cfin}~\cite{RG21}, no coalition below $3f{+}1$ can cause two disjoint groups to observe different $4f{+}1$-sized quorums in this single round. The cost is one message delay, and it does not block the next view from starting concurrently.

This architecture also provides natural resistance to \emph{long-range attacks}, in which an adversary accumulates equivocations across many past views to retroactively create conflicting histories. Since the finalization round is a single-shot protocol that runs after all view changes have concluded, any cross-view equivocations are collapsed into a single instance where the $4f{+}1$ quorum intersection governs safety. The adversary's accumulated cross-view predecisions are inputs to this round, not evasions of it. The no-fork regime (Theorem~\ref{thm:nofork-safety}) formalizes this: for $C\leq 3f$, the single-shot finalization cannot fork regardless of how many predecisions were created across views.

\section{Analysis}\label{sec:analysis}

We establish three key properties: valid-candidacy (removing the $n>2(k{+}t)$ and $n>\frac{3}{2}k{+}3t$ constraints), winner consensus feasibility (the binding limit), and the no-fork regime (deposit-free safety up to $60\%$). We then prove that baiting is necessary above the no-fork threshold.

\subsection{Valid-candidacy}\label{sec:vc}

In the original \Trap{}, the constraints $n>2(k{+}t)$ and $n>\frac{3}{2}k{+}3t$ (Lemma~4.3 of~\cite{TRAP22}) jointly ensure that baiters' commitments reach enough honest players and that the partition overlap contains enough rational equivocators. We show both are unnecessary. Valid-candidacy requires two properties: (i)~a legitimate baiter is guaranteed to become a valid candidate, and (ii)~a non-baiter cannot become a valid candidate after the baiters reveal. We address both, allowing coalition members to deviate arbitrarily from the reliable broadcast sub-protocol (not only from the consensus protocol), which \Trap{} does not explicitly model.

We introduce one protocol-level modification: a proof-of-baiting (PoB) for candidate $j$ is validated using only RB2 lists from players not in the set of detected equivocators (the $3f{+}1$ players identified through the quorum intersection of the fork partitions). This restriction is well-defined because the set of detected equivocators is determined before PoB validation occurs.

\begin{definition}[Valid candidate]\label{def:valid-candidate}
Let $E$ be the set of detected equivocators in a forked \BFTCR{} execution. A player $p_j$ is a valid candidate if $p_j$ reveals a ciphertext that decrypts to a valid proof of fraud bundle, and there exists a set $Q\subseteq N\setminus E$ with $|Q|\geq f{+}1$ such that,
for every $q\in Q$, the RB2 list broadcast by $p_q$ contains $p_j$'s RB1 commitment.
\end{definition}

\begin{lemma}[Valid-candidacy]\label{lem:vc}
In the $5f{+}1$ \BFTCR{}, with PoB validation restricted to RB2 lists from non-equivocating residual players and $f\geq 3$:
\begin{enumerate}[label=(\roman*),leftmargin=1.5em]
\item Each of the $m(k,t)$ baiters who commits PoFs in RB1 is guaranteed to become a valid candidate upon reveal.
\item No non-baiter can become a valid candidate, regardless of deviations from the reliable broadcast sub-protocol.
\end{enumerate}
\end{lemma}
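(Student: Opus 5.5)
Part (i) is a delivery argument driven by the fork itself; part (ii) is a counting argument about who can sign RB2 lists outside $E$. Throughout, $E$ is the set of $3f{+}1$ detected equivocators (Section~\ref{sec:branches}) and $R=N\setminus E$ is the residual set, with $|R|=2f$.

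\textbf{Part (i).} A forked \BFTCR{} execution means at least two disjoint groups of honest players each gathered $4f{+}1$ RB1 deliveries and $4f{+}1$ RB2 lists before revealing. The step I would establish first is that every honest player belongs to some branch and completes RB1 with $4f{+}1$ deliveries. If some honest player did not, the branch count $a$ would fall below what the coalition needs for a profitable fork, by \eqref{eq:cfin}.

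A baiter $p_j$ starts $\textsc{RB1}_j$ with its encrypted PoF bundle before the reveal. For its own branch to reach quorum, the coalition's behaviour must deliver $p_j$'s commitment. Rather than relying on scheduling, I would use that RB1 delivery at any honest player needs $4f{+}1$ echoes. Since $p_j$ behaves correctly in RB1 and sends to everyone, after GST every honest player eventually RB-delivers $c_j$.

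The delicate point is that honest players start RB2 as soon as they hold $4f{+}1$ entries, so $c_j$ might not be among them. I would handle this by treating the RB2 list as the set of RB1 messages delivered by the time of the reveal, which the protocol updates before revealing. Alternatively, I would argue that the coalition cannot exclude $p_j$ without dropping below the $4f{+}1$ quorum needed in the branch.

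The honest players in $R$ number $n-C\ge f{+}1$, and this is exactly the condition subsumed by winner consensus feasibility. None of them is in $E$, since only equivocators are detected. Their RB2 lists therefore supply the set $Q$ required by Definition~\ref{def:valid-candidate}, and $p_j$ becomes a valid candidate.

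\textbf{Part (ii).} A non-baiter committed $\textsc{Enc}(k_j,H(v_j))$ in RB1. Unforgeability of RB1, meaning uniqueness of the RB-delivered value per sender once $4f{+}1$ echoes are fixed, binds $p_j$ to that ciphertext. Decrypting it therefore yields a hash, not a PoF bundle.

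The remaining attack is a coalition member who, after seeing the baiters reveal, starts a fresh RB1 instance or equivocates in RB1 carrying copied PoFs. It would then need $f{+}1$ RB2 lists from $R$ containing this late commitment. Honest players in $R$ broadcast their RB2 lists before the reveal, so none contains a post-reveal commitment.

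Coalition members in $R$ number $C-(3f{+}1)$. I would show this is at most $f$ in every regime considered, since $C_{\textsf{fin}}(\infty)=4f{+}1$ forces $C\le 4f{+}1$. This is short of $f{+}1$, so $Q$ cannot be formed from coalition lists alone.

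Honest RB2 lists cannot be retroactively altered, because RB2 is itself a reliable broadcast. Restricting PoB validation to lists from $R$ removes the $3f{+}1$ equivocators' contributions.

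I expect the main obstacle to be the timing argument in part (i): showing that every honest residual player's RB2 list includes each baiter's commitment despite the early $4f{+}1$ trigger. I would lean on the fork structure, since each partition must be fed the baiter's commitment to reach quorum, and use $f\ge3$ only to make the residual counts strict.
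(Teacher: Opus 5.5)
Your part (ii) follows the paper's counting argument. Part (i), however, has a genuine gap at exactly the obstacle you flag, and neither of your two mechanisms closes it.

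Reliable-broadcast validity for a correct sender is not available here. Any forked execution has $C\ge 3f{+}1>n/3$. The players outside the coalition together with $p_j$ supply at most $n-C+1\le 2f{+}1$ echoes, far below the $4f{+}1$ you invoke. So whether and when $c_j$ is delivered is entirely up to the coalition.

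Your fallback, that the coalition cannot exclude $p_j$ without losing a branch quorum, is false whenever $C>3f{+}1$. A two-branch fork needs only $3f{+}1$ equivocators. The RB1 messages of the other $C-(3f{+}1)$ members can be withheld, or delivered after the honest players' first $4f{+}1$ deliveries, without breaking either quorum. Likewise, your preliminary claim that every honest player lies in some branch is unjustified: with all $C$ members equivocating, the two branches need only $2(4f{+}1-C)<n-C$ honest players. Letting RB2 lists be updated until the reveal is a change of protocol, since RB2 is a single reliable broadcast started at the $4f{+}1$ trigger.

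The missing idea, on which the paper's proof rests, is confidentiality of the commitments. The baiter takes part in the fork as an ordinary equivocator. Before the reveal, $\textsc{Enc}(k_j,\textit{PoFs})$ is computationally indistinguishable from $\textsc{Enc}(k_j,H(v))$. The coalition feeds its equivocators' RB1 messages to both partitions to form the quorums, and it cannot single out the baiter's commitment for suppression. Hence $c_j$ reaches all $n-C$ honest (hence residual) players like every other equivocator's commitment. Then $n-C\ge f{+}1$, which follows from $C\le\lfloor(11f{+}2)/3\rfloor<4f$ for $f\ge 3$, supplies $Q$.

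In part (ii), two justifications need replacing. First, the bound on residual coalition members does not follow from $C_{\textsf{fin}}(\infty)=4f{+}1$, which places no upper bound on $C$. It comes from the winner-consensus window of Lemma~\ref{lem:wc}, which gives $C'=C-(3f{+}1)<2f/3<f{+}1$. Second, it is not guaranteed that every honest residual player has broadcast its RB2 list before any reveal. A baiter reveals after $4f{+}1$ RB2 deliveries, which may be mostly coalition lists, so some honest players may start RB2 afterwards. The paper handles this by noting that at most $n-(4f{+}1)=f<f{+}1$ RB2 lists can arrive after that point.

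Finally, your RB1-uniqueness step fails for coalition senders once $C\ge 3f{+}1$: two sets of $4f{+}1$ echoes overlap in $3f{+}1$ players, and the coalition can fill that overlap. You do not need the step, though. It suffices that signed honest RB2 lists cannot contain a commitment never delivered to their senders, and that the residual coalition lists alone number fewer than $f{+}1$.
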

\begin{proof}
\emph{Part~(i): baiters become valid candidates.}
The baiting strategy (Definition~\ref{def:baiting}) requires the baiter to participate in the fork as an equivocator: this is how the baiter observes conflicting certificates from both fork partitions and constructs PoFs. Since the baiter equivocates like all other coalition members, the fork mechanism delivers their encrypted RB1 commitment to honest players in both partitions. Concretely: each fork partition requires $4f{+}1$ messages to reach quorum, and the coalition delivers all equivocators' RB1 messages to both partitions to achieve this. The coalition cannot selectively suppress the baiter's messages because all RB1 commitments are encrypted: $\textsc{Enc}(k_i, \textit{PoFs})$ is computationally indistinguishable from $\textsc{Enc}(k_i, H(v))$ before the reveal step. All $n{-}C$ honest players therefore deliver the baiter's RB1 and include it in their RB2 lists. These RB2 instances have Bracha agreement (honest senders satisfy $t\leq f < (5f{+}1)/3$). All honest players are in the residual set (they are never equivocators). A PoB requires $f{+}1$ residual RB2 lists containing the commitment. We need $n{-}C\geq f{+}1$, i.e., $C\leq 4f$. Within the winner consensus window $C\leq\lfloor(11f{+}2)/3\rfloor$, this holds for $f\geq 3$ since $(11f{+}2)/3 < 4f$ iff $f>2$. This argument also subsumes the partition-overlap bound $n>\frac{3}{2}k{+}3t$ from Lemma~4.3 of~\cite{TRAP22}, which ensured enough rational equivocators in the partition intersection; here the baiter is guaranteed to be among the equivocators regardless of the partition structure.

\emph{Part~(ii): non-baiters cannot fabricate candidacy.}
After detecting $3f{+}1$ equivocators, the residual set has $n'=2f$ players. The residual coalition has $C'=C{-}(3f{+}1)$ members, and by Lemma~\ref{lem:wc}, $C'<2f/3$. A fabricated PoB for a non-baiter requires $f{+}1$ residual RB2 lists containing a PoF commitment that was never broadcast in RB1. Honest residual players' RB2 lists are consistent (RB agreement for honest senders, $t\leq f<(5f{+}1)/3$) and cannot contain a commitment that was never delivered to them. Only the $C'<2f/3$ residual coalition members' RB2 lists could contain a fabricated entry. Since $2f/3<f{+}1$ for all $f\geq 0$, this is insufficient. Additionally, after $4f{+}1$ RB2 deliveries, at most $n{-}(4f{+}1)=f$ additional RB2 messages can arrive; since $f<f{+}1$, no late commitment can appear in enough residual RB2 lists to form a PoB.
\end{proof}

\begin{remark}\label{rem:vc-trap}
The identical argument applies in $3f{+}1$ ($t_0=f=\lfloor(n{-}1)/3\rfloor$, $h=2f{+}1$). For part~(i): $n{-}C\geq t_0{+}1$ requires $C\leq 2f$, which holds within the winner consensus window $C\leq\lfloor(5f{+}2)/3\rfloor$ for $n\geq 10$. For part~(ii): after detecting $t_0{+}1$ equivocators, the residual coalition has $C'<2f/3<t_0{+}1$ among $n'=2f$ residual players. This removes \emph{both} constraints of Lemma~4.3 of~\cite{TRAP22}: the partition-overlap bound $n>3k/2{+}3t$ and the delivery bound $n>2(k{+}t)$. Both are subsumed by the winner consensus feasibility conditions: $C\leq\lfloor(5f{+}2)/3\rfloor$ and $t\leq f$. In terms of $n$: $n>\max(3t,\;9(k{+}t)/5)$. The original \Trap{} bound $C<n/2$ (from $n>2(k{+}t)$, binding for $k>2t$) extends to $C<5n/9\approx 56\%$, and the bound $C<2n/3{-}t$ (from $n>3k/2{+}3t$, binding for $k<2t$) is similarly relaxed. The constraint $t<2f/3$, previously stated in~\cite{TRAP22}, is also redundant (see Lemma~\ref{lem:wc} proof).
\end{remark}

\subsection{Winner consensus feasibility}\label{sec:wc}

\begin{lemma}[Winner consensus feasibility]\label{lem:wc}
Let $C$ be the coalition size and suppose the \BFTCR{} phase forks into
$a\geq 2$ branches. Then the winner consensus among the residual players solves consensus if
\[
C\leq \left\lfloor\frac{11f+2}{3}\right\rfloor
\qquad\text{and}\qquad
t<\frac{2f}{3}.
\]
\end{lemma}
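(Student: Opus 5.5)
The plan is to reduce the statement to the standard $n'>3t'$ feasibility condition for partially synchronous Byzantine consensus~\cite{DLS88}, applied to the residual instance that runs after the equivocators are excluded.

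\textbf{Step 1: count the excluded equivocators.} Since the \BFTCR{} phase forks into $a\geq 2$ branches, the detection remark of Section~\ref{sec:branches} applies. Any two branch quorums of size $4f{+}1$ intersect in at least $2(4f{+}1)-(5f{+}1)=3f{+}1$ players. Every player in such an intersection signed conflicting values, so accountability yields proofs of fraud against all of them. None of these players is honest, because honest players never sign conflicting messages. Hence $|E|\geq 3f{+}1$ and $E\subseteq$ coalition. The protocol excludes exactly these detected equivocators, taken as a canonical set of $3f{+}1$ players, which leaves $n'=n-(3f{+}1)=2f$ residual players.

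\textbf{Step 2: bound the residual faulty set.} The residual coalition is $C'=C-(3f{+}1)$. I will treat $C'$ as the fault budget of the winner consensus, because residual coalition members may collude arbitrarily on who wins. This is the worst case: rational members behave like Byzantine players there, since the reward $\RR$ motivates them to bias the outcome. Consensus on $n'$ players tolerates $t'$ faults iff $n'\geq 3t'+1$, i.e.\ $C'\leq\lfloor(2f-1)/3\rfloor$, which gives $C'<2f/3$. Substituting,
\[
C\leq 3f+1+\left\lfloor\tfrac{2f-1}{3}\right\rfloor=\left\lfloor\tfrac{11f+2}{3}\right\rfloor .
\]
This recovers the first condition. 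I also need to check that the bound holds even when detection catches more than $3f{+}1$ players. If more are detected, the residual set shrinks, but so does the residual coalition, and at the same rate. Excluding an extra coalition member lowers $n'$ and $t'$ each by one, so the margin $n'-3t'$ rises by $2$ and feasibility is preserved.

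\textbf{Step 3: account for the Byzantine players and for who actually gets excluded.} The main obstacle is the second condition, $t<2f/3$. The issue is that the scheduler and coalition control which coalition members equivocate. In the worst case, every Byzantine player avoids detection and all $3f{+}1$ detected equivocators are rational baiters or non-baiters. Then all $t$ Byzantine players remain in the residual set of size $2f$, and they may act arbitrarily rather than rationally. Safety of the winner consensus against these players alone needs $2f\geq 3t+1$, i.e.\ $t<2f/3$. This is exactly the hypothesis, so I will state it as the Byzantine-only case of Step 2, with $C'\geq t$ in that scenario.

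I expect the delicate point to be justifying that rational residual members can be lumped with Byzantine ones in Step 2. If they instead follow the protocol, the binding requirement reduces to $t<2f/3$ alone. I also expect the combined condition $C'<2f/3$ to make $t<2f/3$ redundant whenever the Byzantine players are all residual, since then $t\leq C'$. That matches the remark after Lemma~\ref{lem:vc} that this constraint is redundant, so I will present it as implied and mention it only for completeness. Finally, I will note that liveness of the winner consensus holds after GST because the $2f-C'>2\cdot 2f/3$ honest residual players can form a quorum of size $n'-t'$.
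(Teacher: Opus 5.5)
Your proposal is correct and follows the paper's proof. It excludes the $3f{+}1$ equivocators from the quorum intersection, applies $n'>3t'$ to the $2f$ residual players with $C'=C-(3f{+}1)$ as the fault budget to get $C\le\lfloor(11f+2)/3\rfloor$, and derives $t<2f/3$ from the worst case where every Byzantine player escapes detection. Two of your extra checks are sound refinements the paper leaves implicit: detecting more than $3f{+}1$ players only widens the $n'-3t'$ margin, and in that worst case $t\le C'$, so $t<2f/3$ already follows from $C'<2f/3$ (this matches the paper's later remarks that this constraint is redundant).
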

\begin{proof}
Any fork with $a\geq 2$ detects at least $3f{+}1$ fraudsters, who are excluded. The winner consensus therefore runs on
\[
n'=(5f{+}1)-(3f{+}1)=2f
\]
residual players, among whom the residual coalition size is
\[
C'=C-(3f{+}1).
\]
Safety of a standard partially synchronous BFT protocol on the residual
instance requires $C'<n'/3=2f/3$, i.e.
\[
C<3f{+}1+\frac{2f}{3}=\frac{11f+3}{3}.
\]
Since $C$ is integral, this is equivalent to
\[
C\leq \left\lfloor\frac{11f+2}{3}\right\rfloor.
\]
Moreover, in the worst case none of the detected equivocators is Byzantine, so all $t$ Byzantine players may remain in the residual set; hence we also need $t<2f/3$. Under these inequalities, the residual instance satisfies the usual $n'>3t'$ condition, so safety and liveness follow.
\end{proof}


For $f=20$: $C_{\textsf{wc}}=\lfloor(11\cdot 20{+}2)/3\rfloor=74$ and $t_{\textsf{max}}=f=20$.

\subsection{The no-fork regime}\label{sec:nofork}

\begin{theorem}[Structural no-fork safety]\label{thm:nofork-safety}
If $C\leq 3f$ and $t\leq f$, then every execution of the finalization phase has $a=1$. Equivalently, no two non-faulty players can decide different values.
\end{theorem}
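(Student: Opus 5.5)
We argue by quorum intersection, essentially specializing the branch bound~\eqref{eq:cfin} (Table~\ref{tab:thresholds}, $C_{\textsf{fin}}(2)=3f{+}1$) to the concrete decision rule of the finalization phase. Suppose, for contradiction, that two non-faulty players $p_i$ and $p_j$ decide different values $v\neq v'$. We first treat decisions by the threshold rule, i.e.\ line~4 of Algorithm~\ref{alg:simple} or the ``$4f{+}1$ hashes match'' branch of Algorithm~\ref{alg:bftcr}. In that case $p_i$ holds a set $Q$ of $4f{+}1$ signed messages carrying $H(v)$, and $p_j$ holds a set $Q'$ of $4f{+}1$ signed messages carrying $H(v')$. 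By unforgeability, each message in $Q$ and $Q'$ was signed by its claimed sender. In Algorithm~\ref{alg:bftcr}, the hashes are the decryptions of RB1 commitments fixed before the reveal. The counting step is then
\[
|Q\cap Q'|\geq 2(4f{+}1)-(5f{+}1)=3f{+}1 .
\]
Every player in $Q\cap Q'$ has signed, or committed to, two distinct hashes. A non-faulty player broadcasts exactly one \textsc{HashCert}, or one RB1 commitment, so every such player is a coalition member. Hence $C\geq 3f{+}1$, which contradicts $C\leq 3f$.

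For the BFTCR variant we must check that a coalition member cannot show $H(v)$ to $p_i$ and $H(v')$ to $p_j$ without equivocating in a detectable way. Here we use RB1's agreement property: since $t\leq f<(5f{+}1)/3$, Bracha agreement holds, and all non-faulty players deliver the same $c_q$ for each sender $q$. Decryption is deterministic given the single revealed key, so $p_i$ and $p_j$ decode the same $d_q$. This gives the same contradiction, and possibly an even stronger one.

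Second, we handle decisions through the conflict branch: $\textsc{Resolve}$ in Algorithm~\ref{alg:simple}, or the PoF return in Algorithm~\ref{alg:bftcr}. $\textsc{Resolve}$ is a deterministic function of the pair of conflicting certificates. We must show that any two non-faulty players that decide this way, or one that decides this way while another decides by threshold, output the same value. The plan is to show that the pair of predecision values in conflict is the same for everyone. If some non-faulty player decided $v$ by threshold, then the argument above shows that no $4f{+}1$ set for $v'\neq v$ exists, and I would have $\textsc{Resolve}$ favour the value that can still gather quorum support. This is the step I expect to be the main obstacle. As written, the algorithm does not pin down $\textsc{Resolve}$ precisely, and cross-view predecisions (Section~\ref{sec:predecision}) can give several distinct certified predecisions when $C\geq f{+}1$. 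First I would fix $\textsc{Resolve}$ as a canonical rule, for example the value with the highest-view certificate or the lexicographically smallest hash. I would then argue that the threshold decision $v$ must equal this canonical value, because $4f{+}1$ players reporting $v$ leave at most $f$ reporters of anything else among non-coalition players. If that fails, I would instead restrict the claim to deciding by threshold and treat the conflict branch as producing a single value agreed across branches, so that $a$ still equals $1$.

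Finally, we conclude that there is at most one decided value among non-faulty players, so $a=1$. We also note that the argument never uses the utilities of coalition members, only their signatures. This is the property later used to tolerate deceitful players.
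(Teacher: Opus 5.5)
Your threshold-rule argument is correct and is the paper's proof. The paper simply invokes the branch bound $C_{\textsf{fin}}(2)=2h-n=3f{+}1$. Your count $|Q\cap Q'|\geq 2(4f{+}1)-(5f{+}1)=3f{+}1$, with every intersecting player having signed two distinct hashes and so lying in the coalition, is that bound unfolded for Algorithms~\ref{alg:simple} and~\ref{alg:bftcr}. Your appeal to Bracha agreement for \emph{every} RB1 sender is not justified, since rational coalition members may deviate inside RB and $C$ can exceed $n/3$. You do not need it, though: an honest sender's commitment is unique by its signature, and an equivocating sender is already counted in $C$.

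The paper's proof never examines the conflict branch; it tacitly reads ``decide'' as ``collect $4f{+}1$ matching hashes''. Your plan for that branch does not go through.
\begin{itemize}
\item Your count shows only that at most $f$ non-coalition players report something other than the threshold value $v$. That says nothing about which value a fixed tie-break selects. Under ``lexicographically smallest hash'', a single honest holder of a cross-view certificate for some $v'\neq v$ can win.
\item Algorithm~\ref{alg:simple} applies \textsc{Resolve} to the first conflicting pair a player sees. So even ``highest view'' fails once three cross-view predecisions exist, which is feasible for $C\leq 3f$.
\end{itemize}
Here is a concrete case. Take $f=2$, $C=6$ and the attack of Section~\ref{sec:predecision}: one honest player predecides $A$ in view~1, one predecides $B$ in view~2, and three predecide $D$ in view~3. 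The scheduler delays the minority hashes to the $D$-holders. The three $D$-holders then collect $3+6=9=4f{+}1$ matching hashes and decide $D$. Meanwhile the $A$-holder first receives $H(B)$ and decides \textsc{Resolve} of $A$ and $B$, which cannot be $D$.

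So no rule applied to a conflicting pair rescues the ``equivalently'' clause for the algorithm as written. What your argument and the paper's actually establish is the threshold version. To get the full statement, make the conflict branch non-deciding (extract PoFs and slash, then keep waiting for $4f{+}1$ matching hashes) rather than searching for a better \textsc{Resolve}. Your fallback is therefore the right move, but it restricts the claim rather than proving it.
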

\begin{proof}
For $n=5f{+}1$ and quorum $h=4f{+}1$, the minimum coalition that can create two same view branches is
\[
C_{\textsf{fin}}(2)=2h-n=3f{+}1.
\]
Hence every coalition of size at most $3f$ is below the two-branch threshold, so at most one value can be finalized.
\end{proof}

\begin{theorem}[Deposit free robustness in the no-fork regime]\label{thm:nofork}
Under the utility model of Section~\ref{sec:gt-model}, if the deployment uses \textsc{NoFork} finalization with $\LL=\RR=0$, then the honest profile is a $(k,t)$-robust equilibrium for every $k,t$ such that $k+t\leq 3f$ and $t\leq f$.
\end{theorem}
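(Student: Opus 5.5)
We bound the utility of every rational coalition member under an arbitrary joint deviation $(\vv{\phi}_K,\vv{\tau}_T)$ and an arbitrary scheduler strategy $\sigma_s$. We then show it never exceeds the utility obtained when $K$ follows the protocol. Since $\LL=\RR=0$, the reward and slashing terms of Eq.~\eqref{eq:utility} vanish identically. Hence for any terminal history $z$,
\[
u_i(z)=\mathbf{1}[\mathsf{cons}(z)=1]+\mathbf{1}[a(z)\geq 2\wedge i\in K(z)]\frac{(a(z)-1)\GG}{\kappa(z)}.
\]
The argument therefore reduces to controlling two quantities under deviation: the loot term, via $a(z)$, and the consensus indicator.

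\emph{Step 1: the loot term is zero under any deviation.} Since $|K|+|T|\leq 3f$ and $t\leq f$, Theorem~\ref{thm:nofork-safety} applies to every execution of the finalization phase regardless of the strategies of $K\cup T$. Here I will make explicit that its proof, via the branch bound~\eqref{eq:cfin} and the $4f{+}1$ quorum intersection in Algorithm~\ref{alg:simple}, treats coalition members as arbitrary. That proof also covers cross-view predecisions, as argued in Section~\ref{sec:predecision}. Hence $a(z)=1$ on every terminal history, and the second term is $0$ for every $i\in K$.

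\emph{Step 2: compare the consensus term.} Under deviation, $u_i\leq 1$. Under the honest profile $(\vv{\sigma}_{-T},\vv{\tau}_T,\sigma_s)$, only $t\leq f$ players are Byzantine. By $f$-immunity of the accountable consensus and of simple finalization (the standard $5f{+}1$ guarantees with quorum $4f{+}1$), every execution terminates after GST with a unique valid decision. So $\mathsf{cons}=1$ and $u_i=1$. This gives $u_i(\vv{\sigma}_{-T},\vv{\tau}_T,\sigma_s)=1\geq u_i(\vv{\sigma}_{N-(K\cup T)},\vv{\phi}_K,\vv{\tau}_T,\sigma_s)$, which is Definition~\ref{def:robust} with $\epsilon=0$, for every $i\in K$. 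It therefore gives even the strong version.

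\emph{Main obstacle.} The delicate point is Step 2's claim that the honest profile yields $\mathsf{cons}=1$ against every scheduler. It requires termination of simple finalization when only $t\leq f$ are Byzantine. With $n-t\geq 4f{+}1$ non-Byzantine players following the protocol and sharing a unique predecision, each collects $4f{+}1$ matching hashes after GST. I would verify this carefully, including that Byzantine $\textsc{HashCert}$ messages with conflicting certificates cannot arise without $f{+}1$ equivocators, which would exceed $t\leq f$. A secondary subtlety is that the scheduler ranges over all strategies, including indefinite pre-GST delay. I would handle this by noting that partial synchrony forces GST eventually and that the game's terminal histories are those where non-terminated players are scheduled infinitely often. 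The comparison is then purely pointwise in $z$ for the deviation side: no $z$ gives more than $1$.
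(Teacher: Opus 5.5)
Your proposal is correct and follows the same route as the paper's proof. With $\LL=\RR=0$ the reward and slashing terms vanish, and Theorem~\ref{thm:nofork-safety} forces $a(z)=1$ under any deviation, so the loot term is zero and the deviating utility is at most $1$; the honest profile, with only $t\leq f$ Byzantine players, gives utility exactly $1$. Your additional points (the bound holds for every $i\in K$, which gives the strong version, and simple finalization terminates because $n-t\geq 4f{+}1$ non-Byzantine players deliver matching hashes after GST) spell out what the paper's proof states in one line.
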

\begin{proof}
Fix any disjoint $K,T\subseteq N$ with $|K|\leq k$, $|T|\leq t$, any
$\vv{\phi}_K\in\SSS_K$, any $\vv{\tau}_T\in\SSS_T$, and any scheduler strategy $\sigma_s$. Under the honest profile, the accountable consensus layer together with finalization tolerates $t\leq f$ Byzantine players, so every rational player obtains utility $1$.

Now consider the deviating profile $(\vv{\sigma}_{N-(K\cup T)},\vv{\phi}_K,\vv{\tau}_T,\sigma_s)$. By Theorem~\ref{thm:nofork-safety}, every terminal history induced by this profile still satisfies $a=1$. Since $\LL=\RR=0$, the only positive term in \eqref{eq:utility} is the baseline consensus payoff, which is at most $1$. Therefore no rational coalition member can obtain utility greater than $1$ by
deviating. Hence the honest profile is $(k,t)$-robust.
\end{proof}

Theorem~\ref{thm:nofork-safety} is purely structural and does not depend on any
utility function. Theorem~\ref{thm:nofork} is the game-theoretic statement and
uses only the explicit utility model~\eqref{eq:utility}.

\begin{corollary}\label{cor:nofork-rational}
At maximal Byzantine budget $t=f$: $k\leq 2f$, tolerating ${\approx}40\%$ deceitful faults~\cite{Basilic23} (arbitrary unknown utility; the weaker alive-but-corrupt model of~\cite{flexibleBFT} is subsumed) with zero deposits. In total, up to $3f$ ($60\%$) safety-breaking faults, nearly double the $n/3$ ($33\%$) of standard two-round voting, at one extra message delay.
\end{corollary}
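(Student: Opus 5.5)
The corollary follows by specializing Theorem~\ref{thm:nofork-safety} and Theorem~\ref{thm:nofork} to $t=f$, and then observing that the safety argument never uses a utility model.

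\emph{Step 1: the arithmetic.} With $t=f$, the constraint $k+t\leq 3f$ of Theorem~\ref{thm:nofork} becomes $k\leq 2f$. Theorem~\ref{thm:nofork} then gives $(k,t)$-robustness with $\LL=\RR=0$ for every such pair. For the percentages, take $n=5f{+}1$: then $2f/n\approx 40\%$, $3f/n\approx 60\%$, and $n/3$ is the usual $33\%$ bound of two-round voting in $3f{+}1$.

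\emph{Step 2: tolerance of deceitful players.} Here I would not use Theorem~\ref{thm:nofork} at all, since that theorem relies on the utility model~\eqref{eq:utility}. I would use the purely structural Theorem~\ref{thm:nofork-safety} instead. Its conclusion $a=1$ is quantified over every execution in which at most $3f$ players deviate in any way, with $t\leq f$ of them Byzantine. Let the $2f$ non-Byzantine deviators follow arbitrary strategies with unknown utilities, as deceitful players~\cite{Basilic23} may. Their behaviour is then a subset of the arbitrary Byzantine behaviour already covered by the bound $C_{\textsf{fin}}(2)=3f{+}1$. So no two non-faulty players decide different values, and no deposit is needed.

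Alive-but-corrupt players~\cite{flexibleBFT} deviate only to break safety and otherwise stay live. Their strategies are therefore a subset of the deceitful strategies, so they are covered as well.

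\emph{Step 3: the cost.} The latency claim comes from Algorithm~\ref{alg:simple}, which adds a single all-to-all \textsc{HashCert} round after the predecision. This is one extra message delay, and as argued in Section~\ref{sec:predecision} it can run concurrently with the next view.

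\emph{Main obstacle.} The delicate point is Step~2. Theorem~\ref{thm:nofork-safety} is stated in terms of coalition size, and I must make sure that count includes every deviating player regardless of its motive. The bound $C_{\textsf{fin}}(2)$ of~\cite{RG21} counts arbitrary equivocators, so arbitrary motives are covered.

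The honest claim in this regime is safety (agreement), not liveness. Deceitful players may also stall the protocol, for example by withholding the votes needed for the $4f{+}1$ threshold when $C>f$. I would state explicitly that the ``$60\%$ safety-breaking faults'' figure refers to agreement only.
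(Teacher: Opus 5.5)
Your proposal is correct and follows the same route as the paper. The paper gives no separate proof: it obtains $k\leq 2f$ by setting $t=f$ in the condition $k+t\leq 3f$ of Theorem~\ref{thm:nofork}, and it grounds the deceitful and alive-but-corrupt claims in the remark that Theorem~\ref{thm:nofork-safety} is purely structural and independent of any utility function, exactly as in your Step~2. Your caveat that the $60\%$ figure concerns agreement rather than liveness fits the paper's ``safety-breaking faults'' wording and its description of deceitful faults as willing to sacrifice liveness.
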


\subsection{Impossibility of rational agreement without baiting}\label{sec:impossibility}

\begin{definition}[Solution preference~\cite{TRAP22}]\label{def:solpref}
A protocol has \emph{solution preference} if it designates one proposal as the default outcome. A protocol \emph{without} solution preference treats all proposals symmetrically.
\end{definition}

\begin{definition}[Minimal blocking set]\label{def:min-blocking}
For $C\geq 3f{+}1$, let
\[
m^\star(C):=\left\lfloor\frac{C-(3f{+}1)}{2}\right\rfloor+1.
\]
This is the minimum number of coalition members whose simultaneous defection destroys every two branch fork in the finalization phase.
\end{definition}

\begin{definition}[Punishment only resolution]\label{def:punishment-only}
A fork resolution mechanism is punishment only with cap $P$ if, in every
forking execution each coalition member can lose at most $P$ net utility due to punishment, and exposing the coalition before the second conflicting quorum forms never gives a coalition member a strictly larger transfer than remaining in the coalition.
\end{definition}
Informally, punishment only resolution excludes from the definition of punishment strategies those who are baiting strategies. We prove in the following then that these kind of punishment only strategies are insufficient to solve the rational agreement problem.
\begin{theorem}[Punishment only impossibility above the no-fork threshold]\label{thm:impossibility}
Fix $n=5f{+}1$ and suppose $C=k{+}t\geq 3f{+}1$. Let $\Pi$ be a protocol without solution preference whose fork resolution mechanism is punishment only with cap $P$. If $\Pi$ does not implement a $(k,t,m^\star(C))$-baiting strategy, then for every $\epsilon\geq 0$ there exists $\GG>k(P+1+\epsilon)$ and a scheduler strategy such that the honest profile of $\Pi$ is not $\epsilon$-$(k,t)$-robust.
\end{theorem}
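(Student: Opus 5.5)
The proof will exhibit a profitable coalition deviation, in the spirit of \cite[Thm.~3.2]{TRAP22}. Since $C\geq 3f{+}1=C_{\textsf{fin}}(2)$, Section~\ref{sec:branches} lets the coalition and the scheduler build a two-branch fork in the finalization phase. The coalition splits the $n-C\leq 2f$ honest players into two groups $H_A,H_B$ and has all $C$ members vote for both values. The scheduler delivers only intra-group honest messages, so each branch collects at least $h=4f{+}1$ votes: we need $|H_A|+C\geq 4f{+}1$ and $|H_B|+C\geq 4f{+}1$, which holds exactly when $C\geq 3f{+}1$. Because $\Pi$ has no solution preference, neither branch is a default outcome that could override the other. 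The realized history therefore has $a(z)=2$, $\kappa(z)\leq k$, and $\mathsf{cons}(z)=0$.

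Next I compare utilities using \eqref{eq:utility}. Under the honest profile every rational player gets exactly $1$: by $t\leq f$ and $f$-immunity we have $\mathsf{cons}=1$, and there is no loot, no reward and no slashing. Under the deviation, each $i\in K$ gets at least $0+\GG/\kappa(z)-P\geq \GG/k-P$, because the punishment-only cap bounds the net loss by $P$. Choosing $\GG>k(P+1+\epsilon)$ gives $\GG/k-P>1+\epsilon$, so the deviation gains more than $\epsilon$ and $\epsilon$-$(k,t)$-robustness fails. The lower bound $\GG>k(P+1+\epsilon)$ fixes the choice of $\GG$; any larger value also works.

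The main obstacle is showing that the fork is actually realized, i.e.\ that no coalition member unilaterally prevents it by exposing the coalition before the second quorum forms. By Definition~\ref{def:min-blocking}, destroying every two-branch fork requires $m^\star(C)$ simultaneous defectors. A single defector, or any set of fewer than $m^\star(C)$, leaves at least $3f{+}1$ equivocators, so the fork persists. For such a defector the punishment-only clause says exposure earns no strictly larger transfer than staying. The defector forgoes its loot share while the fork still happens, so staying weakly dominates.

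A strictly positive inducement for defection could only come from a mechanism that rewards some set of at least $m^\star(C)$ defectors enough to break the fork. That is exactly a $(k,t,m^\star(C))$-baiting strategy, which $\Pi$ is assumed not to implement. Absent such a mechanism, joint defection gives each defector at most $1$, which is less than the $\GG/k-P$ it earns from the fork. So the coalition strategy "all equivocate" is self-enforcing against partial defections. I will argue this carefully by case analysis on the number $d$ of defectors: $d<m^\star$ versus $d\geq m^\star$. I expect the case $d\geq m^\star$ to be the delicate step, since it is where the no-baiting hypothesis must be translated into a concrete utility bound.
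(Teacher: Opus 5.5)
Your proposal is correct and follows essentially the paper's route: realize the two-branch fork from $C\ge 3f{+}1=2h-n$, show that fewer than $m^\star(C)$ defectors cannot destroy it, and compare the honest payoff $1$ with $\GG/k-P>1+\epsilon$.

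Two refinements. First, the paper justifies the partial-defection step with the explicit split condition $2(4f{+}1-Q)\le n-C$, where $Q=C-|M|$. With defectors counted in neither branch, this condition is exactly $|M|\le m^\star(C)-1$, and it is the right criterion; your ``at least $3f{+}1$ equivocators remain'' would suffice only if defectors still helped fill one branch. Second, the paper does not treat your case $d\ge m^\star(C)$, because exhibiting the single all-equivocate deviation already refutes $\epsilon$-$(k,t)$-robustness as defined in Definition~\ref{def:robust}.
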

\begin{proof}
Let $h=4f{+}1$. Since $C\geq 3f{+}1=2h-n$, the coalition can create two
conflicting finalization quorums. Take any set $M$ of coalition members with $|M|\leq m^\star(C)-1$ and suppose they defect. Let $Q=C-|M|$ be the remaining coalition members that still support the fork. By the definition of $m^\star(C)$,
\[
2(h-Q)=2(4f{+}1-C+|M|)\leq 5f{+}1-C=n-C.
\]
Hence the $n-C$ non-coalition players can be partitioned into two disjoint sets $A$ and $B$ with $|A|\geq h-Q$ and $|B|\geq h-Q$. The players in $Q$ can equivocate so that $A\cup Q$ reaches quorum for one value and $B\cup Q$ reaches quorum for a conflicting value. Therefore fewer than $m^\star(C)$ defections do not suffice to destroy the fork.

Under the honest profile, every rational player obtains utility $1$ by
consensus. In the deviating profile above, each rational coalition member that supports the fork obtains utility at least $\GG/k - P$, because a two branch fork already yields loot $\GG$ and punishment-only resolution can subtract at most $P$. If $\GG>k(P+1+\epsilon)$, then
\[
\frac{\GG}{k}-P > 1+\epsilon.
\]
Thus some rational coalition member strictly gains more than $\epsilon$ by deviating, contradicting $\epsilon$-$(k,t)$-robustness. Consequently, any $\epsilon$-$(k,t)$-robust protocol above the no-fork threshold must induce at least $m^\star(C)$ coalition members to expose the coalition before the second conflicting quorum forms, i.e.\ it must implement a $(k,t,m^\star(C))$-baiting strategy.
\end{proof}

\section{Double-Spending}\label{sec:double}

For $3f{+}1\leq C<C_{\textsf{fin}}(3)=\lceil(7f{+}2)/2\rceil$, the maximum spending multiplicity is $a=2$, exactly as in the original \Trap{}. The full baiting mechanism is needed.

\subsection{Baiting threshold}

By Definition~\ref{def:min-blocking}, the minimum number of baiters is
\begin{equation}\label{eq:m}
m(k,t)=m^\star(C)=\max\!\left(1,\;\left\lfloor\frac{C-(3f{+}1)}{2}\right\rfloor{+}1\right).
\end{equation}

\subsection{Deposit and reward}

\begin{theorem}[Double-spend regime]\label{thm:double}
For $3f{+}1\leq C<\lceil(7f{+}2)/2\rceil$ with $t\leq f$, the \Trap{} protocol achieves $\epsilon$-$(k,t)$-robustness if each player deposits $\LL=d\cdot\GG$ with
\begin{equation}\label{eq:d2}
d>\frac{m(k,t)}{k\cdot(3f-m(k,t)+1)},
\end{equation}
and the reward is $\RR=3f\cdot\LL$.
\end{theorem}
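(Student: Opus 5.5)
We follow the structure of the \Trap{} double-spend proof and reduce robustness to a comparison of expected utilities for a single rational coalition member, given that at least $m(k,t){-}1$ other members bait. The fork structure comes from Table~\ref{tab:thresholds} and Section~\ref{sec:branches}. Since $3f{+}1\leq C<C_{\textsf{fin}}(3)$, any deviation yields at most $a=2$ branches. Any fork exposes at least $3f{+}1$ equivocators via the quorum intersection, and these are slashed $\LL$ each. The reward $\RR=3f\LL$ is therefore funded by the slashed deposits, keeping the scheme budget-balanced up to the baiter's own deposit. Lemma~\ref{lem:vc} guarantees that every baiter committing PoFs in RB1 becomes a valid candidate and that no non-baiter can. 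Lemma~\ref{lem:wc} guarantees that the winner consensus on the $2f$ residual players terminates safely, because $C<\lceil(7f{+}2)/2\rceil\leq\lfloor(11f{+}2)/3\rfloor$ for the relevant $f$. By the uniform-selection assumption on \textsc{WinnerConsensus}, each of the $m$ valid candidates wins with probability $1/m$.

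The argument then runs in three steps. \emph{First}, we show that the baiting profile is a $(k,t,m)$-baiting strategy. By Definition~\ref{def:min-blocking} and the argument in Theorem~\ref{thm:impossibility}, once $m=m^\star(C)$ members bait, the second conflicting quorum cannot form. Hence $a=1$ and $\mathsf{cons}=1$. A baiter's expected utility is then $1+\RR/m-\LL\cdot\mathbf{1}[\text{baiter slashed}]$. The worst case is that the baiter's own equivocation is among the detected ones, giving $1+3f\LL/m-\LL$. A non-baiting member who supports the fork gets at most $1+\GG/k-\LL$ if the fork succeeds, but by the blocking property it does not succeed. At best it gets $1-\LL$, or $1$ if undetected. \emph{Second}, we show that a member contemplating whether to be the $m$-th baiter prefers baiting. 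The case where the fork would succeed without it must be compared against the loot share $\GG/k$, which forces
\[
\frac{3f\LL}{m}-\LL\;>\;\frac{\GG}{k}-\frac{\LL}{m}\cdot(\text{correction}),
\]
and rearranging with $\LL=d\GG$ should yield $d\,(3f-m+1)/m>1/k$, i.e.\ \eqref{eq:d2}. Concretely, we expect the bookkeeping to be that the baiter collects $\RR/m=3f\LL/m$, loses its own deposit $\LL$, and forgoes $\GG/k$, while the $m-1$ fellow baiters' deposits are not counted in the reward pool. This accounts for the $3f-m+1$ term. \emph{Third}, we conclude $\epsilon$-$(k,t)$-robustness. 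No coalition member profits from deviating from the honest profile, because any fork attempt triggers baiting, which restores $a=1$. The residual $\epsilon$ comes from the cryptographic indistinguishability of $\textsc{Enc}(k_i,\textit{PoFs})$ versus $\textsc{Enc}(k_i,H(v))$ and from negligible signature forgery. Byzantine members are handled by $t\leq f$ together with Lemma~\ref{lem:wc}.

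The main obstacle is the second step: pinning down precisely which deposits fund the reward and which are lost, so that the inequality comes out exactly as $m/(k(3f-m+1))$ rather than a nearby variant. We would first try the accounting in which all $3f{+}1$ detected equivocators, baiters included, are slashed, and the winner receives $3f\LL$ net of its own forfeited deposit. We would then verify that the indifference point for the marginal baiter, comparing $\GG/k$ against the expected net reward $\LL(3f-m+1)/m$, reproduces \eqref{eq:d2}. We would also check that $m\geq1$ from \eqref{eq:m} keeps the denominator positive, since $m\leq\lfloor f/4\rfloor+1<3f+1$ in this regime.
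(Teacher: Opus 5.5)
Your skeleton matches the paper's proof: fix a rational member deciding whether to be the $m$-th baiter when $m-1$ others bait, use Lemma~\ref{lem:vc} and the uniform $1/m$ lottery, bound the forking payoff by $\GG/k$, then use $m^\star(C)$ for blocking and Lemma~\ref{lem:wc} for feasibility. The gap is that the computation that actually produces \eqref{eq:d2} is left as a placeholder (``correction''). Worse, the accounting you do commit to in Step~1 gives the wrong number.

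There you take the baiter's payoff as $\frac{3f\LL}{m}-\LL=\frac{3f-m}{m}\LL$, i.e.\ the baiter forfeits its deposit even when it wins. The correct fork-side bound is $U^{\mathsf{fork}}\le\GG/k$. A successful fork has $\mathsf{cons}=0$, and a fork supporter need not be among the detected equivocators, so neither your $+1$ nor your $-\LL$ belongs on that side. Against this bound, your accounting requires $d>\frac{m}{k(3f-m)}$, which is strictly stronger than \eqref{eq:d2}. Any $d$ with $\frac{m}{k(3f-m+1)}<d\le\frac{m}{k(3f-m)}$ satisfies the hypothesis but not your inequality, so no rearrangement closes the step (crediting the baseline $+1$ to the baiter does not rescue this for large $\GG$).

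The missing ingredient is the paper's bookkeeping: the lottery winner collects $\RR$ and keeps its own deposit, while each of the $m-1$ losing baiters forfeits $\LL$. Hence
$U^{\mathsf{bait}}=\frac{1}{m}\RR-\frac{m-1}{m}\LL=\frac{3f-m+1}{m}\LL$,
and $U^{\mathsf{bait}}>\GG/k$ with $\LL=d\GG$ is exactly \eqref{eq:d2}. Your explanation of the $-m+1$ term (``fellow baiters' deposits are not counted in the reward pool'') is also not its source. $\RR$ is fixed at $3f\LL$; the $-(m-1)$ is just $m$ times the baiter's own expected forfeiture $\frac{m-1}{m}\LL$.

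A smaller omission: the paper's proof also rules out the coalition profiting by self-triggering the reward. It checks $\RR-(C-1)\LL=(3f-C+1)\LL\le 0$ for $C\ge 3f+1$, and this is what lets it call the profile a \emph{strong} $(k,t,m)$-baiting strategy before concluding robustness. Your budget-balance remark points in that direction, but you never draw the conclusion. With the corrected payoff rule and this check added, the rest of your argument goes through as in the paper.
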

\begin{proof}
Let $m:=m(k,t)$. Fix any rational coalition member $i$ that contemplates whether to bait or support the fork, assuming that exactly $m-1$ other rational members bait. By Lemma~\ref{lem:vc}(i), every baiter becomes a valid candidate. By Lemma~\ref{lem:vc}(ii), no non-baiter can fabricate candidacy. Since $\textsc{WinnerConsensus}$ selects uniformly among the valid candidates, a baiter wins the reward with probability exactly $1/m$. Hence the expected baiting payoff of $i$ is
\[
U_i^{\mathsf{bait}}
=\frac{1}{m}\RR-\frac{m-1}{m}\LL
=\frac{3f-m+1}{m}\LL.
\]

If instead $i$ supports the fork, then in the double spend regime the maximum loot available to any rational coalition member is
\[
U_i^{\mathsf{fork}}\leq \frac{\GG}{k}.
\]
Therefore condition~\eqref{eq:d2} implies
\[
U_i^{\mathsf{bait}} > U_i^{\mathsf{fork}},
\]
so baiting is a strict best response once $m-1$ others bait.

Next, because
\[
m > \frac{C-(3f+1)}{2},
\]
removing the $m$ baiters from both branches leaves at most $C-m$ coalition members supporting the fork, which is insufficient to complete two quorums of size $4f+1$, this is exactly the blocking threshold of Definition~\ref{def:min-blocking}. Hence the fork cannot be finalized once the $m$ baiters defect.

Finally, by Lemma~\ref{lem:wc}, the residual winner consensus is feasible throughout the present regime because
\[
C<\left\lceil\frac{7f+2}{2}\right\rceil
\leq \left\lfloor\frac{11f+2}{3}\right\rfloor
\qquad\text{and}\qquad
t<\frac{2f}{3}.
\]
The reward is loss free because at least $3f+1$ equivocators are slashed, so their deposits cover $\RR=3f\LL$. Moreover,
\[
\RR-(C-1)\LL=(3f-C+1)\LL\le 0
\]
for every $C\ge 3f+1$, so the coalition cannot profit by self-triggering the reward. Therefore the profile in which $m$ rational coalition members bait is a strong $(k,t,m)$-baiting strategy, and the induced profile is $\epsilon$-$(k,t)$-robust.
\end{proof}

\subsection{Deposit table and concrete example}

Table~\ref{tab:double} gives deposit values for $n=101$ ($f=20$, $t_{\max}=20$).

\begin{table}[t]
\centering
\caption{Double-spend regime ($a=2$), $n=101$, $f=20$, $t=20$ (worst case).}
\label{tab:double}
\small
\begin{tabular}{@{}ccccc@{}}
\toprule
$C$ & $k$ & $m$ & $d_{\min}$ & $\LL/\GG$ \\
\midrule
$61$ & $41$ & $1$ & $0.00041$ & $0.041\%$ \\
$63$ & $43$ & $2$ & $0.00079$ & $0.079\%$ \\
$65$ & $45$ & $3$ & $0.0011$ & $0.11\%$ \\
$67$ & $47$ & $4$ & $0.0015$ & $0.15\%$ \\
$70$ & $50$ & $5$ & $0.0018$ & $0.18\%$ \\
\bottomrule
\end{tabular}
\end{table}

Deposits remain under $0.2\%$ of the gain, roughly $3\times$ lower than a na\"ive analysis using only $f{+}1$ detected fraudsters. The improvement comes from the $3f{+}1$ quorum intersection: the reward $\RR=3f\LL$ is $3\times$ larger than the $f\LL$ that would follow from $f{+}1$ detections, so a smaller deposit suffices to make baiting dominant. Concretely: consider a blockchain with 101 validators processing blocks worth up to \$10M each. With $C=65$ ($20$ Byzantine, $45$ rational), each validator deposits $d\cdot\GG\approx 0.115\%\times\$10\text{M}\approx\$11{,}500$, for a total of approximately \$1.2M across all validators, securing against a $65\%$ coalition attempting a double-spend.

\subsection{The $k$-$t$ tradeoff}

The double-spend regime exhibits a tradeoff between rational and Byzantine tolerance that mirrors the original \Trap{}~\cite{TRAP22}. Combining the no-fork and double-spend regimes into a single bound:

\begin{corollary}[$k$-$t$ tradeoff for double-spend tolerance]\label{cor:kt-tradeoff}
\Snare{} tolerates at most double-spending if:
\begin{equation}\label{eq:kt-bound}
n > \max\!\left(5t,\;\; \frac{10(k{+}t)}{7}\right).
\end{equation}
The first constraint ($5t$) comes from $t\leq f=(n{-}1)/5$. The second ($10(k{+}t)/7$) comes from the double-spend finalization threshold ($C < \lceil(7f{+}2)/2\rceil$). The no-fork regime bound $n>\max(5t, 5(k{+}t)/3)$ is subsumed: $5(k{+}t)/3 < 10(k{+}t)/7$ since $5/3>10/7$.
\end{corollary}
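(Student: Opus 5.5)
The plan is to derive the bound directly from the branch-count formula~\eqref{eq:cfin} and then split into the two regimes already handled by Theorems~\ref{thm:nofork} and~\ref{thm:double}. \emph{Tolerating at most double-spending} means that every execution realizes $a\leq 2$. By~\eqref{eq:cfin} and Table~\ref{tab:thresholds}, $a\geq 3$ requires $C\geq C_{\textsf{fin}}(3)=\lceil(7f{+}2)/2\rceil$. So it suffices to show that~\eqref{eq:kt-bound} implies $t\leq f$ and $C=k{+}t<\lceil(7f{+}2)/2\rceil$.

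The first step is the Byzantine constraint. With $n=5f{+}1$, the inequality $n>5t$ reads $5t<5f{+}1$. For integral $t$ this is equivalent to $t\leq f$, which is the hypothesis shared by Theorem~\ref{thm:nofork-safety} and Theorem~\ref{thm:double}.

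The second step is the coalition constraint. From $n>10C/7$ we get
\[
C<\frac{7n}{10}=\frac{7(5f{+}1)}{10}=\frac{7f}{2}+\frac{7}{10}.
\]
On the other hand,
\[
\left\lceil\frac{7f{+}2}{2}\right\rceil\geq\frac{7f}{2}+1>\frac{7f}{2}+\frac{7}{10}.
\]
Hence $C<C_{\textsf{fin}}(3)$, and no three-branch fork is possible.

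I then split on $C$ to obtain robustness as well as the branch bound.
\begin{itemize}
\item If $C\leq 3f$, Theorem~\ref{thm:nofork-safety} gives $a=1$, and Theorem~\ref{thm:nofork} gives deposit-free robustness.
\item If $3f{+}1\leq C<\lceil(7f{+}2)/2\rceil$, we are exactly in the hypothesis range of Theorem~\ref{thm:double}. That theorem yields $\epsilon$-$(k,t)$-robustness with $a\leq 2$ under the deposit~\eqref{eq:d2}.
\end{itemize}
Together these cases cover every $(k,t)$ satisfying~\eqref{eq:kt-bound}.

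The last step is the subsumption remark. The no-fork region $\{n>\max(5t,5(k{+}t)/3)\}$ is contained in the region defined by~\eqref{eq:kt-bound}. The reason is that $5/3>10/7$, so $n>5C/3$ implies $n>10C/7$. The combined bound is therefore governed by the $10/7$ term.

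The main obstacle I expect is the integrality and rounding bookkeeping in the second step. One has to check that the strict real inequality $C<7f/2+7/10$ really lands below the ceiling $\lceil(7f{+}2)/2\rceil$ for both parities of $f$. The uniform lower bound $\lceil x\rceil\geq x$ handles this. A secondary point is to state clearly that the corollary inherits the winner-consensus side conditions of Lemma~\ref{lem:wc} only through Theorem~\ref{thm:double}, and imposes no new constraint of its own.
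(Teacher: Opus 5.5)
Your proposal is correct. Its core is the same substitution the paper uses: with $f=(n-1)/5$, translate $t\le f$ and $C<C_{\textsf{fin}}(3)=\lceil(7f{+}2)/2\rceil$ into bounds on $n$. What differs is the direction of the argument and its precision.

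The paper runs the algebra from the regime conditions to $n$. On the way it replaces $\lceil(7f{+}2)/2\rceil$ by $7f/2=7(n-1)/10$, so it lands on $n>10(k{+}t)/7+1$, which has an extra $+1$ relative to the statement. Read as a necessity chain, that step fails at the top of the regime: for $f=20$, $C=70$ appears in Table~\ref{tab:double}, yet $101>10\cdot 70/7+1$ is false.

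You instead prove the implication the corollary actually asserts. Your estimate $C<7f/2+7/10<7f/2+1\le\lceil(7f{+}2)/2\rceil$ gives exactly $n>10(k{+}t)/7$. In fact, for integral $C$ and $n=5f{+}1$, the inequality $n>10C/7$ is equivalent to $C<\lceil(7f{+}2)/2\rceil$ for both parities of $f$, so your bound is tight. You also orient the subsumption correctly ($5C/3>10C/7$, so $n>5C/3$ implies $n>10C/7$), whereas the statement's text writes $5(k{+}t)/3<10(k{+}t)/7$.

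What the paper's version adds is brevity and an explicit note that the $t<2f/3$ condition of Lemma~\ref{lem:wc} is redundant, because residual Byzantines number at most $C'<2f/3$. Your remark that the corollary ``inherits'' the side conditions of Lemma~\ref{lem:wc} through Theorem~\ref{thm:double} should be sharpened to that observation if you keep your robustness case split. Otherwise the inherited $t<2f/3$ would shrink that case below $t\le f$.

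Also note that Theorems~\ref{thm:nofork} and~\ref{thm:double} assume different deployments: \textsc{NoFork} with $\LL=\RR=0$ versus \BFTCR{} with deposits. So the split does not give robustness for one fixed deployment. For the branch claim $a\le 2$ itself, your first two steps already suffice.
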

\begin{proof}
The double-spend regime requires $C<\lceil(7f{+}2)/2\rceil$ and $t\leq f$. With $f=(n{-}1)/5$: $k{+}t<7(n{-}1)/10$, i.e., $n>10(k{+}t)/7{+}1$; and $t\leq (n{-}1)/5$, i.e., $n\geq 5t{+}1$. The constraint $t<2f/3$, previously required for the winner consensus Byzantine budget, is redundant (see Lemma~\ref{lem:wc}).
\end{proof}

For comparison, the original $3f{+}1$ \Trap{} has $n>\max(\frac{3}{2}k{+}3t,\;2(k{+}t))$; with the retroactive valid-candidacy improvement (Remark~\ref{rem:vc-trap}), this relaxes to $n>\max(3t,\;9(k{+}t)/5)$. Table~\ref{tab:kt} compares the tradeoffs at specific operating points.

\begin{table}[t]
\centering
\caption{$k$-$t$ tradeoff at $n=101$: maximum rational and Byzantine fractions per regime.}
\label{tab:kt}
\small
\begin{tabular}{@{}lcccc@{}}
\toprule
\textbf{Model} & $t_{\max}$ & $k$ at $t_{\max}$ & $t$ at $k_{\max}$ & $k_{\max}$ \\
\midrule
$3f{+}1$~\cite{TRAP22} & $33$ & $1$ & $1$ & $49$ \\
$5f{+}1$ no-fork & $20$ & $40$ & $1$ & $59$ \\
$5f{+}1$ dbl-spend & $20$ & $50$ & $1$ & $69$ \\
$5f{+}1$ trpl-spend & $20$ & $54$ & $1$ & $73$ \\
\bottomrule
\end{tabular}
\end{table}

\section{Triple-Spending and the Winner Consensus Limit}\label{sec:triple}

For $\lceil(7f{+}2)/2\rceil\leq C\leq C_{\textsf{wc}}$, the maximum multiplicity is $a=3$.

\begin{theorem}[Triple-spend regime]\label{thm:triple}
For $\lceil(7f{+}2)/2\rceil\leq C\leq\lfloor(11f{+}2)/3\rfloor$ with $t\leq f$, \Trap{} achieves $\epsilon$-$(k,t)$-robustness if each player deposits $\LL=d\cdot\GG$ with
\begin{equation}\label{eq:d3}
d>\frac{2\cdot m(k,t)}{k\cdot(3f-m(k,t)+1)},
\end{equation}
and the reward is $\RR=3f\cdot\LL$.
\end{theorem}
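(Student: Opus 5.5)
The plan is to follow the proof of Theorem~\ref{thm:double} step by step. The only change is that the fork-support payoff now reflects a three-branch fork. Fix $m:=m(k,t)$ and a rational coalition member $i$ who believes exactly $m-1$ other rational members bait.

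\emph{Baiting payoff.} By Lemma~\ref{lem:vc}(i) every baiter becomes a valid candidate, and by Lemma~\ref{lem:vc}(ii) no non-baiter can fabricate candidacy. Both lemmas apply here because the regime lies inside the winner consensus window $C\leq\lfloor(11f+2)/3\rfloor<4f$ for $f\geq 3$. Since \textsc{WinnerConsensus} picks a uniformly random valid candidate, the baiter wins with probability $1/m$. Hence, exactly as before,
$U_i^{\mathsf{bait}}=\frac{1}{m}\RR-\frac{m-1}{m}\LL=\frac{3f-m+1}{m}\LL$.

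\emph{Fork payoff.} Below $C_{\textsf{wc}}$ we have $C<C_{\textsf{fin}}(4)$; I will check this from Table~\ref{tab:thresholds} via $\lfloor(11f+2)/3\rfloor<\lceil(11f+3)/3\rceil$. So by~\eqref{eq:cfin} at most $a=3$ branches arise. By~\eqref{eq:utility}, the loot term is then at most $(a-1)\GG/\kappa\leq 2\GG/k$, taking $\kappa\geq k$ as in the double-spend proof. Condition~\eqref{eq:d3} rewrites as $\frac{3f-m+1}{m}d\GG>\frac{2\GG}{k}$, which gives $U_i^{\mathsf{bait}}>U_i^{\mathsf{fork}}$. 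Baiting is therefore a strict best response once $m-1$ others bait.

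\emph{Blocking, feasibility, budget balance.} With $m$ defined as in~\eqref{eq:m}, the counting in the proof of Theorem~\ref{thm:impossibility} shows that once $m$ baiters defect, the remaining $C-m$ supporters cannot complete even two quorums of size $4f+1$. A fortiori three branches are impossible, so the fork cannot be finalized. Lemma~\ref{lem:wc} gives feasibility of the residual winner consensus, since $C\leq\lfloor(11f+2)/3\rfloor$ holds by hypothesis. The reward is loss-free because at least $3f+1$ equivocators are slashed (Section~\ref{sec:branches}), which covers $\RR=3f\LL$. Finally, $\RR-(C-1)\LL=(3f-C+1)\LL\leq 0$, so self-triggering is unprofitable. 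Together these show the $m$-baiter profile is a strong $(k,t,m)$-baiting strategy, which yields $\epsilon$-$(k,t)$-robustness.

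\emph{Main obstacle.} The delicate step is the blocking argument in the three-branch regime. One must show that $m$ defections suffice to destroy every fork, and that the coalition cannot instead sacrifice one branch, fall back to a two-branch fork that survives the defections, and still collect loot. I would handle this by noting that $m^\star(C)$ was defined to kill every two-branch fork; any $a\geq 2$ fork contains a two-branch sub-fork, so killing all two-branch forks kills all forks. A secondary check is the identification $\kappa\geq k$. If the loot is split only among supporting members, $\kappa$ could be as small as $k-m$. In that case I would instead bound the loot by $2\GG/(k-m)$ and verify whether~\eqref{eq:d3} still suffices, or else adopt the paper's convention from Theorem~\ref{thm:double} that the per-member loot is at most $(a-1)\GG/k$.
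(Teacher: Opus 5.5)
Your proposal is correct and follows the paper's proof step for step: the same $\frac{3f-m+1}{m}\LL$ baiting payoff via Lemma~\ref{lem:vc}, the fork bound $2\GG/k$, the $m^\star$ blocking count, Lemma~\ref{lem:wc} for feasibility, and the loss-free and self-triggering checks; your explicit $C_{\textsf{wc}}<C_{\textsf{fin}}(4)$ check and the sub-fork argument add a little rigor the paper leaves implicit. Your $\kappa$ caveat is legitimate and applies equally to the paper, which silently takes $\kappa=k$ (if only the $k-m+1$ fork supporters share the loot, \eqref{eq:d3} no longer suffices for $m\geq 2$); also, Lemma~\ref{lem:wc} has a hypothesis $t<2f/3$ that you skip and the paper merely asserts, though the residual bound it is meant to secure does follow from $C-(3f{+}1)<2f/3$.
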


\begin{proof}
Let $m:=m(k,t)$. Fix any rational coalition member $i$ that contemplates whether to bait or support the fork, assuming that exactly $m-1$ other rational members bait. By Lemma~\ref{lem:vc}(i), every baiter becomes a valid candidate. By Lemma~\ref{lem:vc}(ii), no non-baiter can fabricate candidacy. Since $\textsc{WinnerConsensus}$ selects uniformly among the valid candidates, a baiter wins the reward with probability exactly $1/m$. Hence the expected baiting payoff of $i$ is
\[
U_i^{\mathsf{bait}}
=\frac{1}{m}\RR-\frac{m-1}{m}\LL
=\frac{3f-m+1}{m}\LL.
\]

If instead $i$ supports the fork, then in the triple spend regime the maximum loot available to any rational coalition member is
\[
U_i^{\mathsf{fork}}\leq \frac{2\GG}{k}.
\]
Therefore condition~\eqref{eq:d3} implies
\[
U_i^{\mathsf{bait}} > U_i^{\mathsf{fork}},
\]
so baiting is a strict best response once $m-1$ others bait.

Next, because
\[
m > \frac{C-(3f+1)}{2},
\]
removing the $m$ baiters from both branches leaves at most $C-m$ coalition members supporting the fork, which is insufficient to complete two quorums of size $4f+1$. Hence the fork cannot be finalized once the $m$ baiters defect.

Finally, by Lemma~\ref{lem:wc}, the residual winner consensus is feasible throughout the present regime because
\[
C\leq \left\lfloor\frac{11f+2}{3}\right\rfloor
\qquad\text{and}\qquad
t<\frac{2f}{3}.
\]
The reward is loss free because at least $3f+1$ equivocators are slashed, so their deposits cover $\RR=3f\LL$. Moreover,
\[
\RR-(C-1)\LL=(3f-C+1)\LL<0
\]
throughout the present regime, so the coalition cannot profit by self-triggering the reward.

Therefore the profile in which $m$ rational coalition members bait is a strong $(k,t,m)$-baiting strategy, and the induced profile is $\epsilon$-$(k,t)$-robust.
\end{proof}

\begin{table}[t]
\centering
\caption{Triple-spend regime ($a=3$), $n=101$, $f=20$, $t=20$ (worst case).}
\label{tab:triple}
\small
\begin{tabular}{@{}ccccc@{}}
\toprule
$C$ & $k$ & $m$ & $d_{\min}$ & $\LL/\GG$ \\
\midrule
$71$ & $51$ & $6$ & $0.0043$ & $0.43\%$ \\
$72$ & $52$ & $6$ & $0.0042$ & $0.42\%$ \\
$73$ & $53$ & $7$ & $0.0049$ & $0.49\%$ \\
$74$ & $54$ & $7$ & $0.0048$ & $0.48\%$ \\
\bottomrule
\end{tabular}
\end{table}

In the same \$10M-block scenario: at $C=74$ ($20$ Byzantine, $54$ rational), each validator deposits $\approx 0.48\%\times\$10\text{M}=\$48{,}000$, totaling \$4.8M staked, still modest relative to the \$10M secured.

\subsection{Quadruple-spending and the winner consensus ceiling}

The threshold for four branches is $C_{\textsf{fin}}(4)=\lceil(11f{+}3)/3\rceil$. The winner consensus limit is $C_{\textsf{wc}}=\lfloor(11f{+}2)/3\rfloor$. For $f=20$: $C_{\textsf{fin}}(4)=75$ while $C_{\textsf{wc}}=74$, so quadruple-spending exceeds the winner consensus window by $1$ player. A coalition of $75$ \emph{can} create four branches inside \BFTCR{}, but the resolution mechanism breaks: after detecting $3f{+}1=61$ fraudsters, $2f=40$ players remain with $14$ residual coalition members, exceeding $40/3\approx 13.3$. This is not a limitation of the branch formula, which permits more branches for larger coalitions, but of the winner consensus being a standard partial-synchrony BFT consensus with the $n'>3t'$ requirement~\cite{DLS88}.

At $C=75$, $a=4$, the total coalition-certificate appearances are $\geq a(4f{+}1)-(n{-}C)=4\cdot 81-26=298$. Each coalition member appears in at most $a=4$ certificates. If $x$ members appear in exactly one certificate (undetected), the total is $\leq x+4(75-x)=300-3x$. Thus $298\leq 300-3x$, giving $x\leq 0$: every single coalition member appears in $\geq 2$ certificates and is detected. At $C=75$ with $a=4$, detection is total and the WC runs on 26 honest players with zero faults. The rational response is to create only $a=2$ branches (minimizing detection at $3f{+}1=61$), which is what makes the WC bottleneck bind in practice. This covering argument generalizes: for any $a\geq 2$, the number of undetected coalition members is $x\leq\lfloor(a\cdot n-a\cdot h-n+C)/(a-1)\rfloor$ where $h=4f{+}1$.

\section{Summary of Regimes}\label{sec:summary}

Figure~\ref{fig:regimes} and Tables~\ref{tab:regimes}--\ref{tab:comparison} summarize the regime structure. Figure~\ref{fig:branches} compares the branch thresholds $C_{\textsf{fin}}(a)/n$ across the two models, showing the substantially higher coalition required for each branch count in $5f{+}1$. Figure~\ref{fig:feasibility} plots the feasible $(t/n, k/n)$ pairs for each regime: the $5f{+}1$ no-fork zone (green) already exceeds the full $3f{+}1$ \Trap{} for $t\leq 10\%$, and the fork regimes (blue) extend to over $73\%$ total coalition. Figure~\ref{fig:deposits} shows the deposit requirements as a function of $C$ for $n=101$.

\begin{figure*}[t]
\centering
\begin{tikzpicture}[xscale=0.17,yscale=0.7]
\node[font=\small\bfseries,anchor=east] at (-2,0.5) {$5f{+}1$};
\fill[blue!12] (0,0) rectangle (20,1);
\fill[green!18] (20,0) rectangle (60,1);
\fill[orange!18] (60,0) rectangle (70,1);
\fill[red!14] (70,0) rectangle (74,1);
\fill[gray!14] (74,0) rectangle (100,1);
\draw[thick] (0,0) rectangle (100,1);
\foreach \x in {20,60,70,74} {\draw[thick] (\x,0) -- (\x,1);}
\node[font=\footnotesize\bfseries] at (10,0.5) {BFT};
\node[font=\footnotesize\bfseries] at (40,0.5) {No-fork ($\LL{=}0$)};
\node[font=\footnotesize\bfseries] at (65,0.5) {$a{=}2$};
\node[font=\scriptsize\bfseries] at (71.5,0.5) {$3$};
\node[font=\scriptsize,gray] at (86,0.5) {beyond WC};
\foreach \x/\lab in {0/0\%,20/20\%,60/60\%,70/70\%,74/74\%,80/80\%,100/100\%} {
  \node[font=\scriptsize,above,inner sep=1pt] at (\x,1.05) {\lab};
}
\node[font=\small\bfseries,anchor=east] at (-2,-1.5) {$3f{+}1$};
\fill[blue!12] (0,-2) rectangle (33,-1);
\fill[orange!18] (33,-2) rectangle (50,-1);
\fill[red!14] (50,-2) rectangle (56,-1);
\fill[gray!14] (56,-2) rectangle (100,-1);
\draw[thick] (0,-2) rectangle (100,-1);
\foreach \x in {33,50,56} {\draw[thick] (\x,-2) -- (\x,-1);}
\node[font=\footnotesize\bfseries] at (16.5,-1.5) {BFT};
\node[font=\footnotesize\bfseries] at (41.5,-1.5) {$a{=}2$};
\node[font=\scriptsize\bfseries] at (53,-1.5) {$3$};
\node[font=\scriptsize,gray] at (78,-1.5) {beyond WC};
\foreach \x/\lab in {0/0\%,33/33\%,50/50\%,56/56\%,67/67\%,100/100\%} {
  \node[font=\scriptsize,below,inner sep=1pt] at (\x,-2.05) {\lab};
}
\node[font=\small,below] at (50,-2.65) {Coalition size $C/n$};
\end{tikzpicture}
\caption{Regime comparison. \textbf{Top:} \Snare{} (this work). The no-fork regime ($\LL{=}0$) spans $20\%$ to $60\%$; the fork regimes extend to $74\%$ (WC limit). \textbf{Bottom:} $3f{+}1$ \Trap{}~\cite{TRAP22} with the retroactive valid-candidacy fix (Lemma~\ref{lem:vc}). No no-fork regime exists (predecision and finalization thresholds coincide at $33\%$); the original bound was $50\%$, extended to $56\%$ by the VC fix. Gray zones require a more fault-tolerant winner selection mechanism.}
\label{fig:regimes}
\end{figure*}
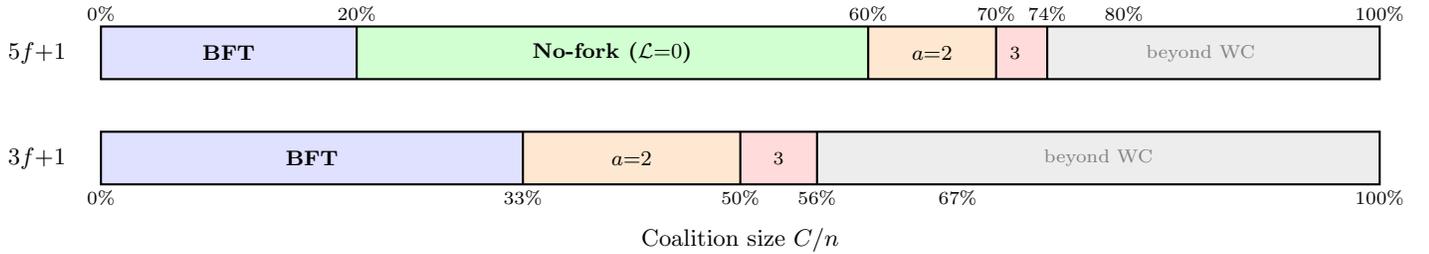

\begin{figure}[t]
\centering
\begin{tikzpicture}
\begin{axis}[
  width=\columnwidth, height=5.2cm,
  xlabel={Branches $a$},
  ylabel={Min.\ coalition $C_{\textsf{fin}}(a)/n$},
  xmin=1.5, xmax=12, ymin=0.2, ymax=0.85,
  xtick={2,3,4,5,6,8,10,12},
  ytick={0.2,0.3,0.4,0.5,0.6,0.7,0.8},
  yticklabel={\pgfmathprintnumber\tick},
  grid=major,
  grid style={gray!25},
  legend style={at={(0.02,0.98)},anchor=north west,font=\scriptsize,draw=none,fill=white,fill opacity=0.85},
  every axis label/.style={font=\small},
  tick label style={font=\scriptsize},
]
\addplot[thick, blue, mark=*, mark size=1.5pt, domain=2:12, samples=11]
  {(4*x - 5)/(5*(x-1))};
\addlegendentry{$5f{+}1$ ($h{=}4n/5$)}

\addplot[thick, red, mark=square*, mark size=1.5pt, domain=2:12, samples=11]
  {(2*x - 3)/(3*(x-1))};
\addlegendentry{$3f{+}1$ ($h{=}2n/3$)}

\draw[blue, dashed, thin] (axis cs:1.5,0.8) -- (axis cs:12,0.8);
\draw[red, dashed, thin] (axis cs:1.5,0.6667) -- (axis cs:12,0.6667);

\draw[blue, dotted, thick] (axis cs:1.5,0.733) -- (axis cs:12,0.733);
\node[font=\tiny,blue,fill=white,inner sep=1pt] at (axis cs:9.5,0.755) {$5f{+}1$ WC};

\draw[red, dotted, thick] (axis cs:1.5,0.556) -- (axis cs:12,0.556);
\node[font=\tiny,red,fill=white,inner sep=1pt] at (axis cs:9.5,0.578) {$3f{+}1$ WC (new)};

\draw[red, dashdotted, thin] (axis cs:1.5,0.5) -- (axis cs:12,0.5);
\node[font=\tiny,red!60,fill=white,inner sep=1pt] at (axis cs:4,0.48) {$3f{+}1$ original};

\end{axis}
\end{tikzpicture}
\caption{Minimum coalition $C_{\textsf{fin}}(a)/n$ for $a$ branches. Dashed: quorum thresholds. Dotted: winner consensus (WC) limits. The $3f{+}1$ WC limit at $5n/9\approx 56\%$ is the retroactive improvement from Lemma~\ref{lem:vc}; the original bound was $n/2$ (dash-dotted).}
\label{fig:branches}
\end{figure}
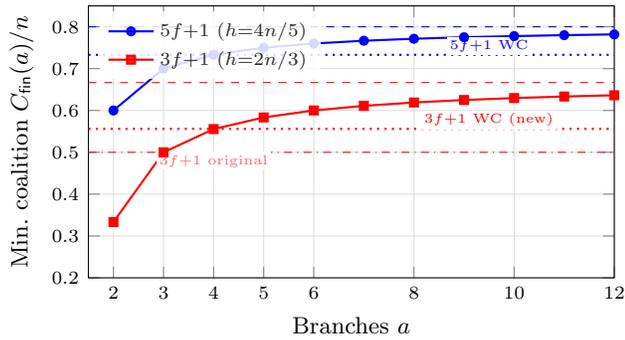

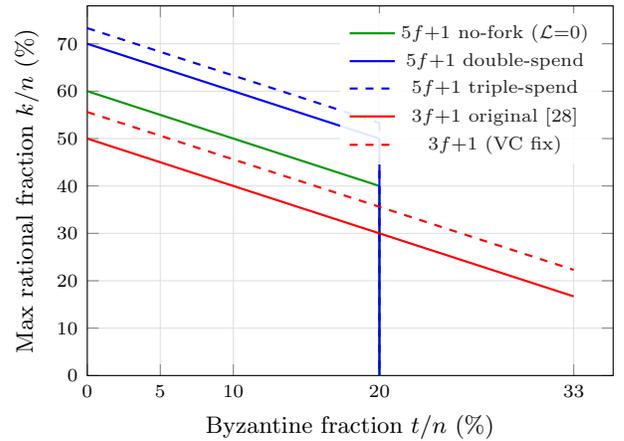
\begin{figure}[t]
\centering
\begin{tikzpicture}
\begin{axis}[
  width=\columnwidth, height=6.5cm,
  xlabel={Byzantine fraction $t/n$ (\%)},
  ylabel={Max rational fraction $k/n$ (\%)},
  xmin=0, xmax=0.36, ymin=0, ymax=0.78,
  xtick={0,0.05,0.10,0.20,0.333},
  xticklabels={$0$,$5$,$10$,$20$,$33$},
  ytick={0,0.1,0.2,0.3,0.4,0.5,0.6,0.7},
  yticklabels={$0$,$10$,$20$,$30$,$40$,$50$,$60$,$70$},
  grid=major,
  grid style={gray!25},
  legend style={at={(0.98,0.98)},anchor=north east,font=\scriptsize,draw=none,fill=white,fill opacity=0.9},
  every axis label/.style={font=\small},
  tick label style={font=\scriptsize},
]

\addplot[thick,green!60!black,domain=0:0.2,samples=50]
  {0.6 - x};
\addplot[thick,green!60!black,forget plot] coordinates {(0.2,0.4)(0.2,0)};
\addlegendentry{$5f{+}1$ no-fork ($\LL{=}0$)}

\addplot[thick,blue,domain=0:0.2,samples=50]
  {0.7 - x};
\addplot[thick,blue,forget plot] coordinates {(0.2,0.5)(0.2,0)};
\addlegendentry{$5f{+}1$ double-spend}

\addplot[thick,blue,dashed,domain=0:0.2,samples=50]
  {0.733 - x};
\addplot[thick,blue,dashed,forget plot] coordinates {(0.2,0.533)(0.2,0)};
\addlegendentry{$5f{+}1$ triple-spend}

\addplot[thick,red,domain=0:0.333,samples=50]
  {max(0.5 - x, 0)};
\addlegendentry{$3f{+}1$ original~\cite{TRAP22}}

\addplot[thick,red,dashed,domain=0:0.333,samples=50]
  {max(0.556 - x, 0)};
\addlegendentry{$3f{+}1$ (VC fix)}

\end{axis}
\end{tikzpicture}
\caption{Feasible $(t/n, k/n)$ pairs. The $5f{+}1$ no-fork regime (green, $\LL{=}0$) covers $t$ up to $20\%$ for coalitions up to $60\%$. The fork regimes (blue) extend beyond $60\%$ with the same $t\leq 20\%$ Byzantine tolerance as the no-fork regime: the constraint $t<2f/3$ previously stated is redundant because the residual Byzantine count is bounded by $C'<2f/3$ regardless of $t$ (see Lemma~\ref{lem:wc}). The $3f{+}1$ model (red) tolerates higher $t$ but lower total coalition.}
\label{fig:feasibility}
\end{figure}

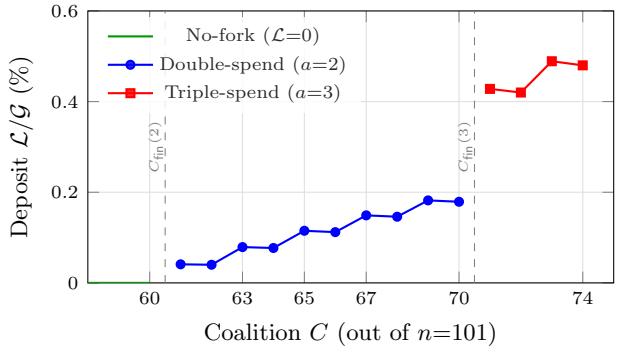
\begin{figure}[t]
\centering
\begin{tikzpicture}
\begin{axis}[
  width=\columnwidth, height=5.2cm,
  xlabel={Coalition $C$ (out of $n{=}101$)},
  ylabel={Deposit $\LL/\GG$ (\%)},
  xmin=58, xmax=75, ymin=0, ymax=0.6,
  xtick={60,63,65,67,70,74},
  grid=major,
  grid style={gray!25},
  legend style={at={(0.02,0.98)},anchor=north west,font=\scriptsize,draw=none,fill=white,fill opacity=0.85},
  every axis label/.style={font=\small},
  tick label style={font=\scriptsize},
]

\addplot[thick,green!60!black,mark=none] coordinates {(58,0)(60,0)};
\addlegendentry{No-fork ($\LL{=}0$)}

\addplot[thick,blue,mark=*,mark size=1.5pt] coordinates {
(61, 0.041)
(62, 0.040)
(63, 0.079)
(64, 0.077)
(65, 0.115)
(66, 0.112)
(67, 0.149)
(68, 0.146)
(69, 0.182)
(70, 0.179)
};
\addlegendentry{Double-spend ($a{=}2$)}

\addplot[thick,red,mark=square*,mark size=1.5pt] coordinates {
(71, 0.428)
(72, 0.420)
(73, 0.489)
(74, 0.480)
};
\addlegendentry{Triple-spend ($a{=}3$)}

\draw[gray,dashed] (axis cs:60.5,0) -- (axis cs:60.5,0.6);
\draw[gray,dashed] (axis cs:70.5,0) -- (axis cs:70.5,0.6);
\node[font=\tiny,gray,rotate=90] at (axis cs:60.2,0.3) {$C_{\textsf{fin}}(2)$};
\node[font=\tiny,gray,rotate=90] at (axis cs:70.2,0.3) {$C_{\textsf{fin}}(3)$};

\end{axis}
\end{tikzpicture}
\caption{Required deposit as a fraction of the maximum gain per block for $n{=}101$, $f{=}20$, $t{=}20$ (worst case). The step pattern reflects the discrete baiting threshold $m$. The jump at $C_{\textsf{fin}}(3){=}71$ is due to both the increased gain factor ($2\GG$ vs $\GG$) and the larger $m$.}
\label{fig:deposits}
\end{figure}

The regimes form a natural progression. Below $20\%$, standard BFT consensus guarantees safety directly. From $20\%$ to $60\%$, the accountable consensus can be attacked across views but the one-shot finalization round prevents any final fork: this is the no-fork zone, requiring zero deposits. From $60\%$ to $70\%$, double-spending becomes possible inside the finalization phase, and the full \Trap{} baiting mechanism with tiny deposits ($<0.2\%$) ensures robustness. From $70\%$ to $74\%$, triple-spending becomes possible with deposits under $0.5\%$. Beyond $74\%$, the winner consensus cannot tolerate the residual coalition.

For a system with 101 validators and \$10M blocks: in the no-fork regime, zero capital is locked. In the double-spend regime (e.g., $C=65$), total staked capital is \$1.2M to secure \$10M per block. In the triple-spend regime ($C=74$), total staked capital is \$4.8M. These are modest relative to the value secured, especially considering that deposits are returned after the finalization window. The key driver of these low deposits is the $3f{+}1$ quorum intersection: because so many fraudsters are detected, the reward $\RR=3f\LL$ can be $3\times$ larger than a na\"ive $f\LL$ analysis would suggest, making baiting dominant at a correspondingly lower deposit.

\begin{table}[t]
\centering
\caption{The four regimes for $n=101$ ($f=20$).}
\label{tab:regimes}
\small
\begin{tabular}{@{}lccccc@{}}
\toprule
\textbf{Regime} & $C$ & $a_{\max}$ & $t_{\max}$ & $k_{\max}$ & $d_{\max}$ \\
\midrule
BFT & ${\leq}20$ & $1$ & $20$ & $0$ & $0$ \\
No-fork & $21$--$60$ & $1$ & $20$ & $40$ & $0$ \\
Double-spend & $61$--$70$ & $2$ & $20$ & $50$ & $0.18\%$ \\
Triple-spend & $71$--$74$ & $3$ & $20$ & $54$ & $0.49\%$ \\
\bottomrule
\end{tabular}
\end{table}

\begin{table}[t]
\centering
\caption{Comparison with $3f{+}1$ \Trap{}~\cite{TRAP22}.}
\label{tab:comparison}
\small
\begin{tabular}{@{}lcc@{}}
\toprule
& \textbf{$3f{+}1$~\cite{TRAP22}} & \textbf{\Snare{} (this work)} \\
\midrule
Predec.\ threshold & $C\geq n/3$ & $C\geq n/5$ \\
Final $a{=}2$ & $C\geq n/3$ & $C\geq 3n/5$ \\
Final $a{=}3$ & $C\geq n/2$ & $C\geq 7n/10$ \\
Predec./final gap & $1\times$ & $3\times$ \\
No-fork max $C$ & n/a & $3n/5$ \\
Full max $C$ & $n/2$ & $\approx 11n/15$ \\
Full max $C$ (VC fix) & $5n/9$ & $\approx 11n/15$ \\
Max $k$ ($t{=}t_{\max}$) & $n/6$ & $\approx 8n/15$ \\
Deposit (no-fork) & n/a & $0$ \\
Max deposit ($a{=}2$) & $\approx 3\%$ & ${<}0.2\%$ \\
\bottomrule
\end{tabular}
\end{table}

Table~\ref{tab:comparison} highlights the structural improvements. In $3f{+}1$, the predecision and finalization thresholds for $a=2$ coincide at $n/3$, so the layered structure is invisible: every predecision disagreement immediately threatens a final fork. In $5f{+}1$, they diverge by a factor of $3$ ($n/5$ vs $3n/5$), creating the large no-fork zone. The ``VC fix'' row shows the retroactive improvement from Lemma~\ref{lem:vc}: in $3f{+}1$, the original bound $n/2$ could have been $5n/9$. In $5f{+}1$, this fix has no additional effect because the winner consensus (Lemma~\ref{lem:wc}) is already the binding constraint.

\section{Correctness}\label{sec:correctness}

\begin{theorem}[Main]\label{thm:main}
The \Snare{} protocol solves rational agreement in all regimes:
\begin{enumerate}[label=(\roman*),leftmargin=1.5em]
\item $C\leq f$, $t\leq f$: standard BFT safety.
\item $f<C\leq 3f$, $t\leq f$: Theorem~\ref{thm:nofork}.
\item $3f{+}1\leq C<\lceil(7f{+}2)/2\rceil$, $t\leq f$: Theorem~\ref{thm:double}.
\item $\lceil(7f{+}2)/2\rceil\leq C\leq\lfloor(11f{+}2)/3\rfloor$, $t\leq f$: Theorem~\ref{thm:triple}.
\end{enumerate}
\end{theorem}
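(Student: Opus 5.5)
The plan is a case-by-case assembly of earlier results, after first fixing what ``solves rational agreement'' requires. By the definition of rational agreement, we need two things. First, \Snare{} must be $f$-immune: it solves consensus against $t\leq f$ Byzantine players and no rationals. Second, it must be $\epsilon$-$(k,t)$-robust for some $(k,t)$ with $k+t\geq f+1$.

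I would establish $f$-immunity once, uniformly across regimes. With $C=t\leq f$ and $k=0$, the accountable consensus layer terminates after GST with a unique $4f{+}1$ certificate, since two conflicting certificates need $3f{+}1$ equivocators. Theorem~\ref{thm:nofork-safety} then gives $a=1$ in the finalization phase. For termination I would argue as follows:
\begin{itemize}
\item All $n-t\geq 4f{+}1$ correct players broadcast matching hashes (Algorithm~\ref{alg:simple}) or matching encrypted hashes.
\item In the latter case RB1 and RB2 complete, since the Bracha bound $t<n/3$ holds.
\item Each correct player therefore collects $4f{+}1$ matching values and decides.
\end{itemize}
No PoFs arise, so the wrapper returns all deposits. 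This settles case (i), and supplies the $f$-immunity half for every regime.

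For the robustness half, I would dispatch each case to its theorem:
\begin{itemize}
\item Case (ii) is exactly Theorem~\ref{thm:nofork}, with $\LL=\RR=0$ and \textsc{NoFork} finalization. It gives $(k,t)$-robustness, hence $\epsilon$-robustness with $\epsilon=0$.
\item Case (iii) is Theorem~\ref{thm:double}, using the deposit bound \eqref{eq:d2}.
\item Case (iv) is Theorem~\ref{thm:triple}, using the deposit bound \eqref{eq:d3}.
\end{itemize}
In each of (ii)--(iv), $C\geq f+1$ holds by the regime's lower bound, so the nontriviality condition $k+t\geq f+1$ is met, provided $k>0$ and $t>0$ can be chosen with $t\leq f$. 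This is immediate for $f\geq 1$.

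Two consistency checks remain. First, the regimes must partition $1\leq C\leq \lfloor(11f+2)/3\rfloor$ without gaps. This is a direct integer comparison of the boundaries $f$, $3f$, $3f{+}1$, $\lceil(7f+2)/2\rceil$ and $\lfloor(11f+2)/3\rfloor$, together with the observation that $\lceil(7f+2)/2\rceil\leq\lfloor(11f+2)/3\rfloor$ for the relevant $f$. Second, Lemma~\ref{lem:vc} assumes $f\geq 3$, and the fork-regime theorems rely on it, so I would state that hypothesis explicitly for cases (iii)--(iv).

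The main obstacle is the Byzantine budget in the fork regimes. Lemma~\ref{lem:wc} as stated requires $t<2f/3$, whereas cases (iii)--(iv) assume only $t\leq f$. To close this gap I would reprove the residual bound in Lemma~\ref{lem:wc} directly. The residual coalition is $C'=C-(3f+1)<2f/3$ by the bound on $C$. Every residual Byzantine player belongs to that residual coalition, so $t'\leq C'<n'/3$ regardless of $t$, where $t'$ is the number of Byzantine players left in the residual set. This makes the condition $t<2f/3$ redundant, as the paper's remarks indicate, and lets Theorems~\ref{thm:double} and~\ref{thm:triple} apply on the full range $t\leq f$.
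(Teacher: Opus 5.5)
Your proposal is correct and follows the paper's proof: $f$-immunity and termination are argued once, and regimes (i)--(iv) are dispatched to standard BFT safety and Theorems~\ref{thm:nofork}, \ref{thm:double}, \ref{thm:triple}. Your bound $t'\leq C'<n'/3$ is the same residual-coalition condition the paper's proof invokes for the winner consensus, and it is why the paper's remarks call the $t<2f/3$ hypothesis of Lemma~\ref{lem:wc} redundant.

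One small simplification: the $f\geq 3$ caveat is unnecessary. Since $\lfloor(11f+2)/3\rfloor\leq 4f$ for every $f\geq 1$, the bound $n-C\geq f+1$ already holds throughout regimes (iii)--(iv).
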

\begin{proof}
The $t_0$-immunity in all regimes follows from the accountable consensus tolerating $t\leq f$ and the finalization round's reliable broadcasts terminating with $\geq 4f{+}1$ participants.

Termination: in the no-fork regime, the all-to-all finalization round terminates after GST since all honest players broadcast and deliver within $\Delta$. In the fork regimes, the \BFTCR{} phase terminates by the liveness of reliable broadcast and the reveal step, and the winner consensus terminates by standard partial-synchrony liveness since the residual coalition satisfies $C'<n'/3$ (Lemma~\ref{lem:wc}).

For $\epsilon$-$(k,t)$-robustness: regime (i) is standard BFT safety. Regime (ii) follows from Theorems~\ref{thm:nofork-safety} and~\ref{thm:nofork}, Theorem~\ref{thm:nofork-safety} gives the structural no-fork property, and Theorem~\ref{thm:nofork} lifts it to $(k,t)$-robustness under the explicit utility model of Section~\ref{sec:gt-model}. Regimes (iii) and (iv) follow from Theorems~\ref{thm:double} and~\ref{thm:triple} respectively: valid-candidacy (Lemma~\ref{lem:vc}) ensures each baiter becomes a valid candidate~(i) and no non-baiter can fabricate candidacy~(ii); baiting-dominance ensures at least $m(k,t)$ rational players betray the coalition; the winner consensus (Lemma~\ref{lem:wc}) selects the reward recipient; lossfree-reward ensures the reward is funded by slashed deposits; and the strong baiting property prevents coalitions from self-triggering the reward. In all fork cases, the $m$ baiters prevent finalization of the disagreement, and the winner consensus resolves it deterministically.
\end{proof}

Liveness of the accountable consensus layer is orthogonal to the \BFTCR{} mechanism: any partially synchronous BFT protocol tolerating $t\leq f$ Byzantine faults provides liveness after GST, and the \BFTCR{} phase does not modify the consensus layer. The finalization round, reliable broadcasts, and reveal step all terminate after GST by construction, even if rationals that know they will be caught and slashed via PoFs stop becoming live, due to them being dynamically removed from the committee and thresholds updated, in what is called active accountability by prior work \cite{Basilic23}. Liveness of the overall protocol is therefore inherited from the underlying accountable consensus, and the game-theoretic analysis (safety via $(k,t)$-robustness) is independent of the liveness argument.

\section{Impact on Prior Work}\label{sec:impact}

\paragraph{A protocol-agnostic safety amplification.}
The observation that one extra all-to-all broadcast round with threshold $4f{+}1$ increases safety from ${\approx}20\%$ to ${\approx}60\%$ is independent of \Trap{}. It applies to any $5f{+}1$ consensus protocol: append the round after the main consensus produces a predecision, before declaring it final. The round is off the critical path (subsequent views proceed concurrently) and adds one message delay total, not one per view. Combining this with mixed-model approaches such as ebb-and-flow protocols~\cite{ENT21}, where liveness comes from an available chain layer and finality from a BFT gadget, is a natural direction for future work: the $5f{+}1$ finality gadget would enjoy the $3\times$ safety amplification while liveness is handled by the available layer.

\paragraph{ZLB~\cite{RG21}.}
The fork-branch bound~\eqref{eq:cfin}~\cite{RG21} remains correct as a same-view partition result. Cross-view attacks create predecision disagreements below the same-view threshold. In $3f{+}1$ the distinction is invisible ($C_{\textsf{fin}}(2)=n/3$ coincides with the predecision threshold); in $5f{+}1$ they diverge $3\times$.

\paragraph{\Trap{}~\cite{TRAP22}.}
Theorem~4.2 of~\cite{TRAP22} explicitly handles conflicting predecisions and shows \BFTCR{} resolves them, which is the right abstraction. Lemma~\ref{lem:vc} shows that both constraints of Lemma~4.3 of~\cite{TRAP22}---the partition-overlap bound $n>3k/2{+}3t$ and the delivery bound $n>2(k{+}t)$---were conservative: the fork mechanism delivers baiters' commitments to all honest players, and PoB validation restricted to residual RB2 lists prevents fabrication. Removing both constraints retroactively extends the $3f{+}1$ \Trap{} from $C<n/2$ to $C<5n/9$, with the winner consensus feasibility conditions ($C\leq\lfloor(5f{+}2)/3\rfloor$ and $t\leq f$) as the new binding limits. In $5f{+}1$, the no-fork zone and large quorum intersection create a richer regime structure.

\section{Discussion}\label{sec:discussion}

\paragraph{Zero deposits in the no-fork regime.}
In this regime, rational players cannot profit from a final disagreement. Disclosing PoFs earns a positive reward funded by protocol inflation, transaction fees, or any source unrelated to player deposits. The protocol does not need to hold player capital hostage. The $4f{+}1$ quorum structure provides enough safety for liveness-based incentives to dominate without economic penalties.

\paragraph{Byzantine tolerance across regimes.}
The system fixes $n=5f{+}1$ at deployment. The actual values of $k$ and $t$ are unknown: they are whatever materializes at runtime. The protocol guarantees safety as long as the bounds are met, and the regime that applies depends on the actual $(k,t)$. The only deployment-time choice is whether to use simple finalization (targeting the no-fork regime) or \BFTCR{} with deposits (targeting the fork regimes). In the no-fork regime, safety holds for any $C\leq 3f$ with $t\leq f$ ($20\%$), including $k\leq 2f$ ($40\%$). In the fork regimes, safety extends to $C$ up to ${\approx}74\%$ with the \emph{same} Byzantine tolerance $t\leq f$ ($20\%$): the constraint $t<2f/3$ previously stated for the winner consensus is redundant, because the residual Byzantine count is bounded by $C'<2f/3$ regardless of $t$ (Lemma~\ref{lem:wc}). The fork regime therefore adds ${\approx}14\%$ coalition tolerance (from $60\%$ to $74\%$) at no cost in Byzantine tolerance, only requiring deposits.
With the fixed-reward variant described below, the WC is eliminated and the ceiling rises to $C\leq 4f\approx 79.99\%$. Deposits are more expensive ($d>(a{-}1)/(3f{+}1)$, roughly $7$--$15\times$ the lottery version) because the protocol cannot distinguish few-rationals from many-rationals-all-baiting: the slashed pool must fund the worst case. Safety extends to $C$ up to ${\approx}79.99\%$.

\paragraph{Extending beyond the winner consensus.}
The WC serves two logically distinct functions: (a)~resolving which branch is canonical, and (b)~selecting the reward recipient. Function~(a) requires no consensus: given the PoFs and both branches, every honest player independently computes the same canonical branch via an agreed deterministic merge rule~\cite{RG21,Basilic23}, of which awareness of whether disagreement is still possible depends on the number of detected faults and assumption on the maximum size of the tolerated coalition. The entire WC bottleneck is function~(b). The WC can be eliminated by awarding a fixed reward $\RR$ to every valid baiter whose reveal is verified. Since each baiter's RB1 commitment was delivered via reliable broadcast, any honest player receiving a decryption key can locally verify and forward the reveal exactly once, giving implicit dissemination without a separate RB round or consensus on the candidate set. The deposit condition simplifies to $d>(a{-}1)/(3f{+}1)$, independent of $C$, $k$, and $m$. The binding constraint becomes valid-candidacy alone: $C\leq 4f$ (${\approx}79\%$ for $f=20$). For $a\geq 3$ branches, not all honest players may detect the fork in a single \BFTCR{} round (paralleling the confirmation issue in ZLB~\cite{RG21}); iterated execution of \Snare{} with exclusion of detected equivocators resolves this, as in ZLB, with faster convergence in $5f{+}1$ due to the large quorum intersection. A dedicated treatment of iterated \Snare{} with the fixed-reward variant is future work. Alternatively, Basilic's eventual consensus mode~\cite{Basilic23} on the $2f$ residual tolerates $C'<f$ instead of $C'<2f/3$, reaching the same $C\leq 4f$ ceiling with temporary disagreement on the reward; and a trusted third party or soft synchrony assumption can serve as escrow only in the event of a detected disagreement.

\paragraph{Asynchronous compatibility.}
The \BFTCR{} mechanism and the baiting incentive structure are asynchronous: reliable broadcast terminates without timing assumptions, and the deposit, reward, and PoB validation are purely combinatorial. The partial synchrony assumption in this paper enters only through two components: the accountable consensus layer producing predecisions, and the winner consensus selecting the reward recipient, both of which can be replaced by probabilistically live, asynchronous counterparts. Consequently, \Trap{} and \Snare{} can operate under asynchronous consensus with no modification to the \BFTCR{} phase or the game-theoretic analysis. The same observation applies to the original $3f{+}1$ \Trap{}.

\paragraph{Deposit scaling.}
The generic deposit formula has factor $(\bar{a}{-}1)$ in the numerator and $3f{-}m{+}1$ in the denominator. The $3f$ (rather than $f$) comes from the $3f{+}1$ quorum intersection in the $5f{+}1$ model: the large number of detected fraudsters funds a reward $3\times$ larger than the original \Trap{}, making baiting dominant at correspondingly lower deposits. Triple-spend deposits are $2\times$ the double-spend base (the $(\bar{a}{-}1)$ factor). Deposits stay under $0.5\%$ of the gain even at the highest coalitions.

\section{Conclusion}\label{sec:conclusion}

We presented \Snare{}, the adaptation of \Trap{} to the $5f{+}1$ model, and identified a regime structure with qualitative improvements over $3f{+}1$. Below $60\%$ coalition, one extra all-to-all round prevents any final disagreement with zero deposits, a result applicable to any $5f{+}1$ protocol at one message delay. From $60\%$ to $70\%$, double-spending requires deposits under $0.2\%$. From $70\%$ to $74\%$, triple-spending requires under $0.5\%$. The $3\times$ gap between predecision ($20\%$) and finalization ($60\%$) thresholds, combined with the $3f{+}1$ quorum intersection that funds a $3\times$ larger reward, makes the pentagonal model particularly attractive: accountability triggers early, final disagreement remains expensive, and the deposits needed to incentivize betrayal are minimal.

We also showed that valid-candidacy holds unconditionally in both models, retroactively extending the $3f{+}1$ \Trap{} to $5n/9$. The binding constraint is the winner consensus, which can be eliminated by awarding a fixed reward to each valid baiter upon reveal verification, pushing the ceiling to $C\leq 4f\approx 79\%$ at the cost of higher deposits. For greater fault tolerance, iterated exclusion \`a la ZLB~\cite{RG21} can reduce the effective coalition across successive rounds; a dedicated treatment is future work.

\bibliographystyle{plain}

\end{document}